\documentclass[a4paper,onecolumn,superscriptaddress]{quantumarticle}
\pdfoutput=1

\usepackage{amsmath,amssymb}
\usepackage{mathtools}
\usepackage{color}
\usepackage{lineno}
\usepackage{hyperref}
\usepackage{bm}

\usepackage{qcircuit}
\usepackage{algorithm,algorithmic}
\usepackage{braket}
\usepackage{comment}
\usepackage{enumitem}
\usepackage{subfigure}
\usepackage[sorting=none, backend = bibtex, style=numeric-comp,url=false]{biblatex}
\makeatletter
\newenvironment{breakablealgorithm}
{
		\begin{center}
			\refstepcounter{algorithm}
			\hrule height.8pt depth0pt \kern2pt
			\renewcommand{\caption}[2][\relax]{
				{\raggedright\textbf{\ALG@name~\thealgorithm} ##2\par}
				\ifx\relax##1\relax 
				\addcontentsline{loa}{algorithm}{\protect\numberline{\thealgorithm}##2}
				\else
				\addcontentsline{loa}{algorithm}{\protect\numberline{\thealgorithm}##1}
				\fi
				\kern2pt\hrule\kern2pt
			}
		}{
		\kern2pt\hrule\relax
	\end{center}
}
\makeatother

\usepackage{ntheorem}
\newtheorem{theorem}{Theorem}[section]

\newtheorem{lemma}[theorem]{Lemma}

\newtheorem*{proof}{{{Proof.}}}
\newtheorem{definition}{Definition}[section]
\newtheorem{mainresult}{Main Result}

\begin{document}

\title{Beyond Sparsity: Quantum Block Encoding for Dense Matrices via Hierarchically Low Rank Compression} 
	
\author{Kun Tang}
\email{kun.tang@zju.edu.cn}
\affiliation{
	School of Mathematical Sciences, Zhejiang University, Hangzhou, Zhejiang, China, 310058
}

\author{Jun Lai}
\email{laijun6@zju.edu.cn}
\affiliation{
	School of Mathematical Sciences, Zhejiang University, Hangzhou, Zhejiang, China, 310058
}
\affiliation{
    Center for Interdisciplinary Applied Mathematics, Zhejiang University, Hangzhou, Zhejiang, China, 310058
}

\begin{abstract}
While quantum algorithms for solving large scale systems of linear equations offer potentially exponential speedups, their application has largely been confined to sparse matrices. This work extends the scope of these algorithms to a broad class of structured dense matrices arise in potential theory, covariance modeling, and computational physics, namely, hierarchically block separable (HBS) matrices. We develop two distinct methods to make these systems amenable to quantum solvers. The first is a pre-processing approach that transforms the dense matrix into a larger but sparse format. The second is a direct block encoding scheme that recursively constructs the necessary oracles from the HBS structure. We provide a detailed complexity analysis and rigorous error bounds for both methods. Numerical experiments are presented to validate the effectiveness of our approaches.
\end{abstract}

\maketitle

\section{Introduction}
Recent advances \cite{aruteQuantumSupremacyUsing2019} in quantum computing have shown its potential for addressing problems that are difficult for classical computers. Especially, the prospect of exponential speedups has driven the exploration of quantum algorithms across a wide range of scientific and industrial areas. An important application of quantum computing is solving large scale linear systems. The first quantum algorithm for this purpose is the HHL algorithm \cite{harrowQuantumAlgorithmSolving2009}, which provides an exponential speedup for the quantum linear systems problem (QLSP). Subsequently, the quantum singular value transformation (QSVT) framework \cite{gilyenQuantumSingularValue2019} was proposed, which can also be applied to solve linear systems. Readers are referred to \cite{moralesQuantumLinearSystem2025} for a comprehensive review of these methods. 

However, the performance of these algorithms relies on a key assumption: sparsity. The HHL algorithm, for example, depends on the ability to efficiently simulate the Hamiltonian evolution of the system matrix. The efficiency of this simulation, in turn, depends on the matrix's sparsity \cite{childsFirstQuantumSimulation2018}. Similarly, QSVT-based approaches require access to a block encoding of the system matrix. While constructing a block encoding for a sparse matrix is relatively straightforward, doing so for a dense matrix remains a challenge. This sparsity issue becomes severe because, under data access models similar to those used for block encodings, it is possible to construct classical randomized algorithms that match the performance of their quantum counterparts \cite{gilyenImprovedQuantuminspiredAlgorithm2022}. Consequently, better approaches for dense matrices are needed, but existing methods have their own drawbacks. For instance, the quantum algorithms studied in \cite{wossnigQuantumLinearSystem2018} handle dense matrices with only a polynomial speedup instead of an exponential one. 

The goal of this paper is to overcome the difficulty of constructing an efficient block encoding for dense matrices. We do so by exploiting the inherent special structures within certain important classes of dense matrices \cite{nguyenBlockencodingDenseFullrank2022}, including those arising in integral equations, covariance modeling, and computational physics. This structure often arises when matrix entries are defined by a kernel function $K(x,y)$, where interactions between well-separated clusters of points can be approximated by low rank operators. This principle gives rise to structured matrices like hierarchically off-diagonal low rank (HODLR) \cite{laiFastDirectSolver2014} and hierarchically semiseparable (HSS) matrices \cite{xiaFastAlgorithmsHierarchically2010}. HSS matrices employ a nested basis, where the basis at a given hierarchical level is constructed from those of its children.  For more information of these hierarchical structures, see \cite{hackbuschHierarchicalMatricesAlgorithms2015}.

\subsection{Our contributions}
We formally analyze the compressibility of dense matrices under the hierarchically block separable (HBS) framework \cite{hoFastDirectSolver2012}. We quantify how such matrices can be reduced to linear size with uniformly bounded entries. In particular, proxy compression and strong RRQR-based interpolative decomposition enable the construction of hierarchical factors with $O(N)$ computational cost and $O(N)$ memory size. The precise statement are given in Theorem~\ref{thm:hss_sparsification_properties}.

Based on this hierarchical structure, we introduce a quantum algorithm that uses sparsification. We combine the hierarchical factors into a single extended sparse system whose dimension grows only linearly, while its row and column sparsities remain bounded. This procedure makes the system compatible with standard sparse matrix block encoding oracles.
\begin{mainresult}\label{thm:informal_sparse_blockencoding}
	Let $A_{\mathrm{sp}}$ be the extended sparse system derived from an HBS approximation of a dense matrix $A$. Then $A_{\mathrm{sp}}$ has bounded row and column sparsities and contains only $O(N)$ nonzero entries. Consequently, one can implement a block encoding of $A_{\mathrm{sp}}$ with subnormalization factor bounded by a constant (independent of $N$), using oracles that access only $O(N)$ stored nonzeros.
\end{mainresult}
A detailed formulation in the oracle model is given in Theorem~\ref{thm:sparsification_block}.

We also analyze the side effects introduced by extended sparsification. Our analysis addresses the probability loss from auxiliary variables, the conditional number of the system via a regularized formulation and the propagation of the HBS approximation error to the solution. The precise statement and constants are given in Theorem \ref{thm:sparsification_regularization_error}. 

Finally, we develop a direct recursive construction of a block encoding for the HBS approximation itself. By using QMM to multiply block encodings along the hierarchy and LCU to add near field blocks level by level, this construction yields upper bounds for subnormalization growth and error accumulation.
\begin{mainresult}\label{thm:informal_direct_recursive}
	Given block encodings of the HBS factors $\{D_\ell,L_\ell,R_\ell\}$ with bounded subnormalization factors, there is a recursive procedure (based on QMM and LCU) that produces a block encoding of the full HBS approximation. The resulting subnormalization factor and overall precision loss admit upper bounds determined by the hierarchy depth and the factor-wise subnormalizations.
\end{mainresult}
The complete quantitative statement is Theorem~\ref{thm:recursive_be_analysis} in Section~\ref{se:direct_be}. 
A detailed derivation of the subnormalization and error bounds is provided immediately after the theorem.

\subsection{Related works and comparison}

The concept of using hierarchical matrix structures for quantum block encoding was recently explored by Nguyen et al. \cite{nguyenBlockencodingDenseFullrank2022}. They introduced a method for block encoding dense kernel matrices by splitting them into block sparse hierarchical levels and using a linear combination of unitaries (LCU). Their approach achieves optimal subnormalization factors for kernels with specific analytical decay properties. However, because it evaluates entries across all dense admissible blocks at each level, the strategy's data access cost scales as $O(N^2)$ for general matrices.

Our work takes a different approach by using the HBS format. We employ nested interpolative decompositions and proxy surfaces to compress the matrix into sparse factors with only $O(N)$ nonzeros. Furthermore, our method removes the need for an analytical kernel representation, making it applicable to a broader range of structured or perturbed matrices.

The remainder of this paper is organized as follows. In Section \ref{se:preliminary}, we review the basic formulations of quantum linear system algorithms and show the obstacles in constructing block encodings for dense matrices. Section \ref{se:HBS_algorithm} introduces HBS matrices and presents a modified algorithm adapted for the subsequential block encoding in quantum setting. In Section \ref{se:sparsification_quantum}, we propose a sparsification based quantum approach, which uses an extended sparsification to combine the HBS matrix components into a single but slightly larger system. As an alternative, Section \ref{se:direct_be} describes a direct method for constructing the block encoding from the hierarchical matrix structure. Section \ref{se:numerical} presents numerical results from experiments on integral equations to evaluate the performance of our methods based on the classical computers. Finally, Section \ref{se:conclusions} provides concluding remarks and discusses potential directions for future work.

\section{Quantum linear system algorithms}\label{se:preliminary}

To solve linear systems in the quantum setting, we first review the definition of the quantum linear system problem (QLSP) \cite{childsQuantumAlgorithmSystems2017}. In the following presentation, if not explicitly mentioned, we always denote $\|\cdot\|$ the $2$-norm.
\begin{definition}[QLSP]\label{def:QLSP}
    Consider an $N\times N$ Hermitian matrix $A$, which has condition number $\kappa$ and is normalized such that $\Vert A\Vert =1$. Let $\vec{b}$ be an $N$-dimensional vector and $\vec{x}:=A^{-1}\vec{b}$. The corresponding quantum states are defined as
    \[\ket{b}:=\frac{\sum_ib_i\ket{i}}{\Vert \sum_ib_i\ket{i}\Vert},\ \ket{x}:=\frac{\sum_ix_i\ket{i}}{\Vert \sum_ix_i\ket{i}\Vert}.\]
    Given access to a procedure $\mathcal{P}_A$ for the entries of $A$ and a state preparation procedure $\mathcal{P}_b$ that creates $\ket{b}$ in time complexity $\mathcal{O}({\rm{poly}}(\log N))$, the goal is to output a state $\ket{\widetilde{x}}$ satisfying $\Vert \ket{x}-\ket{\widetilde{x}}\Vert \leq \varepsilon$. The algorithm should succeed with probability $\Omega(1)$ and provide a flag indicating success.
\end{definition}

Several algorithms have been developed to solve the QLSP. The most well known is the HHL algorithm, which uses phase estimation and Hamiltonian simulation \cite{harrowQuantumAlgorithmSolving2009}. Another one is QSVT \cite{gilyenQuantumSingularValue2019}, which provides a general framework for applying polynomial functions to a matrix. Especially, the QSVT uses a polynomial to approximate the inverse function $1/x$, and apply the pseudoinverse $A^+$ to the state $\ket{b}$. However, a prerequisite for these approaches is the ability to embed the matrix $A$ into a larger unitary operator, a procedure known as block encoding. 

\begin{definition}[Block Encoding]\label{def:block_encoding}
    An $(n+a)$-qubit unitary $U$ is an $(\alpha, a, \varepsilon)$-block-encoding of a $2^n\times 2^n$ matrix $A$ if it satisfies:
    \[\Vert A-\alpha(\bra{0}^{\otimes a}\otimes I_{2^n})U(\ket{0}^{\otimes a}\otimes I_{2^n})\Vert \leq\varepsilon,\]
    where $\alpha \ge \|A\|$ is the subnormalization factor, $a$ is the number of ancillary qubits, and $\varepsilon>0$ is a sufficiently small precision number.
\end{definition}

The construction of block encodings can be achieved through several established methods, particularly for matrices with specific sparsity structures. We call a matrix $A$ is $s$-row-sparse ($s$-column-sparse) if each row (column) of $A$ has at most $s$ nonzero elements. The following lemma, adapted from Gilyén et al. \cite{gilyenQuantumSingularValue2019} and summarized in Nguyen et al. \cite{nguyenBlockencodingDenseFullrank2022}, provides an example for oracle accessible sparse matrices.
\begin{lemma}[Block encoding \cite{nguyenBlockencodingDenseFullrank2022}]\label{apth:block_encoding}
    Let $A\in\mathbb{C}^{2^n\times 2^n}$ be a matrix that is $s_r$-row-sparse and $s_c$-column-sparse. Suppose we are given access to the following $(n+1)$-qubit oracles:
    \[\mathcal{O}_r:\ket{i}\ket{k}\to\ket{i}\ket{r_{ik}}\quad \text{for } 0\leq i<2^n,0\leq k<s_r,\]
    \[\mathcal{O}_c:\ket{l}\ket{j}\to\ket{c_{lj}}\ket{j}\quad \text{for } 0\leq j<2^n,0\leq l<s_c,\]
    where $r_{ik}$ is the column index of the $k$-th nonzero entry in the $i$-th row of $A$, and $c_{lj}$ is the row index of the $l$-th nonzero entry in the $j$-th column of $A$. If there are fewer than $k$ nonzero entries, the oracle $\mathcal{O}_r$ maps $\ket{i}\ket{k}$ to $\ket{i}\ket{2^n+k}$, and $\mathcal{O}_c$ has the same action. Furthermore, let $\hat{a}\geq \max_{i,j}\vert a_{ij}\vert$, and suppose we have an oracle for the matrix entries:
    \[\mathcal{O}_A:\ket{i}\ket{j}\ket{0}^{\otimes b}\to \ket{i}\ket{j}\ket{\tilde{a}_{ij}},\quad {\rm{for }}\ 0\leq i,j<2^n,\]
    where $\tilde{a}_{ij}$ is the $b$-qubit fixed-point representation of $a_{ij}/\hat{a}$. Then, one can implement an $(\hat{a}\sqrt{s_rs_c},n+3,\varepsilon)$-block-encoding of $A$ with a single query to $\mathcal{O}_r$ and $\mathcal{O}_c$, two queries to $\mathcal{O}_A$, and $O(n+{\rm polylog}(\frac{\hat{a}\sqrt{s_r s_c}}{\varepsilon}))$ additional one and two qubit gates, using $O(b,{\rm polylog}(\frac{\hat{a}\sqrt{s_r s_c}}{\varepsilon}))$ ancilla qubits.
\end{lemma}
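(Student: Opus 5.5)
The plan is the standard Gilyén et al. sparse-access construction: build two state-preparation unitaries $U_L$ and $U_R$ whose inner products reproduce $a_{ij}/(\hat a\sqrt{s_rs_c})$. We work on a register of $n+1$ system qubits (the extra qubit absorbs the out-of-range labels $2^n+k$), plus an ancilla qubit for amplitude loading, plus index registers.

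\textbf{Column-side state.} Define $U_R$ acting on $\ket{0}\ket{0}\ket{j}$. First apply a uniform superposition over $l<s_c$ on the index register, giving $\frac{1}{\sqrt{s_c}}\sum_l\ket{l}\ket{j}$. Then apply $\mathcal{O}_c$ to obtain
\[
\frac{1}{\sqrt{s_c}}\sum_l\ket{c_{lj}}\ket{j}.
\]
Next query $\mathcal{O}_A$ on $(c_{lj},j)$ to write $\tilde a_{c_{lj}j}$ into a $b$-qubit register. Perform a controlled rotation of the amplitude ancilla, mapping $\ket{0}$ to $\tilde a\ket{0}+\sqrt{1-|\tilde a|^2}\ket{1}$; a complex $\tilde a$ is handled by using two ancillas or by also applying a phase. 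Finally uncompute $\mathcal{O}_A$.

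\textbf{Row-side state and inner product.} Define $U_L$ on $\ket{0}\ket{0}\ket{i}$ in the same way, but with $\mathcal{O}_r$ and no amplitude loading. This gives $\frac{1}{\sqrt{s_r}}\sum_k\ket{0}\ket{i}\ket{r_{ik}}$, after arranging the registers so that it is written as a state $\ket{i}\ket{r_{ik}}$ in a common ordering. Set $U=U_L^\dagger\,(\mathrm{SWAP})\,U_R$, where the SWAP aligns the registers so that both sides carry $\ket{\text{row}}\ket{\text{col}}$. The matrix element $\bra{0}\bra{i}U\ket{0}\ket{j}$ is then a sum over pairs $(k,l)$ with $r_{ik}=j$ and $c_{lj}=i$. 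For a genuine nonzero entry, each of these indices is unique, so exactly one pair contributes, and the element equals $\tilde a_{ij}/\sqrt{s_rs_c}$. Padding labels $\geq 2^n$ never match a valid index because of the extra qubit, so they contribute zero. Zero entries of $A$ that happen to be listed contribute $0$ through $\tilde a=0$. This yields a block encoding of $A/(\hat a\sqrt{s_rs_c})$, up to fixed-point error. The ancilla count is $1$ (sparsity overflow) $+\,1$ (amplitude) $+\,(n+1)$ (index) $\approx n+3$, matching the statement. The query count is one each of $\mathcal{O}_r$ and $\mathcal{O}_c$ and two of $\mathcal{O}_A$ (compute and uncompute). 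The uniform superpositions over $s_r$ and $s_c$ values need $O(n)$ gates when $s$ is not a power of two; one can use exact amplitude amplification, or pad to a power of two and absorb the constant.

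\textbf{Error analysis (the main obstacle).} The truncation to $b$ bits gives entrywise error $2^{-b}$ in $a_{ij}/\hat a$. The controlled rotation is synthesized to precision $\delta$ with $\mathrm{polylog}(1/\delta)$ gates. I will bound the operator-norm error of the encoded block by the sparse-matrix estimate $\|E\|\le\sqrt{s_rs_c}\max|E_{ij}|$. After multiplying by the subnormalization, the total error is at most $\hat a\sqrt{s_rs_c}\cdot(2^{-b}+\delta)/\sqrt{s_rs_c}$ scaled appropriately. Choosing $b,\log(1/\delta)=O(\log(\hat a\sqrt{s_rs_c}/\varepsilon))$ therefore gives precision $\varepsilon$ with the stated $\mathrm{polylog}$ gate and ancilla costs. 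The delicate points are exactly here: keeping the register bookkeeping consistent, so that padded indices are orthogonal to valid ones, and converting entrywise rotation error into a spectral-norm bound without losing more than the $\sqrt{s_rs_c}$ factor that the subnormalization already absorbs.
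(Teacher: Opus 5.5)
Your argument is correct and is the standard sparse-access construction of Gily\'en et al.\ \cite{gilyenQuantumSingularValue2019}, which the paper imports (via \cite{nguyenBlockencodingDenseFullrank2022}) without giving a proof of its own. It has the same ingredients: one query each to $\mathcal{O}_r$ and $\mathcal{O}_c$ in the two state-preparation unitaries, compute--rotate--uncompute with $\mathcal{O}_A$, the inner product selecting the unique pair $(k,l)$ with $r_{ik}=j$ and $c_{lj}=i$, padded labels $\ge 2^n$ never matching, and $1+1+(n+1)=n+3$ ancillas. One small slip: your intermediate error expression should read $\hat a\sqrt{s_rs_c}\,(2^{-b}+\delta)$. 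The sparse bound gives $\|E_{\mathrm{block}}\|\le\sqrt{s_rs_c}\cdot(2^{-b}+\delta)/\sqrt{s_rs_c}=2^{-b}+\delta$ before multiplying by the subnormalization, and this corrected bound is exactly what your final choice $b,\log(1/\delta)=O(\log(\hat a\sqrt{s_rs_c}/\varepsilon))$ controls.
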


Lemma \ref{apth:block_encoding} establishes that the efficiency of the block encoding, specifically the subnormalization factor $\alpha$, is determined by the geometric mean of the row and column sparsities. While this construction is highly efficient for sparse matrices, its limitations become apparent for dense matrices. For example, if $s_r = s_c = 2^n$, the subnormalization factor $\alpha = \hat{a}\sqrt{s_r s_c}$ scales as $\hat{a} \cdot 2^n$. Since the complexity of algorithms using this block encoding typically scales polynomially with $\alpha$, this exponential growth in $n$ poses a significant drawback. We address this problem by proposing and analyzing two methods. The first uses an extended sparsification technique to convert the dense matrix into a sparse one, followed by an analysis of the cost to construct a block encoding for the resulting system. The second is a direct method for block encoding the hierarchical matrix structure itself, for which we analyze the complexity and the required subnormalization factor.

In the complexity analysis that follow, we account for the number of nonzero elements. This number provides a more realistic measure of resource requirements than simply using the matrix dimensions. The cost of implementing the oracles, such as $\mathcal{O}_r,\mathcal{O}_c$ and $\mathcal{O}_A$ in Lemma \ref{apth:block_encoding}, is directly related to the gate count needed to access the value and location of each nonzero entry. In particular, \cite{zhangCircuitComplexityQuantum2024} shows that the gate count for constructing these oracles is directly proportional to the number of nonzero elements in the matrix representation. Therefore, to implement our proposed methods effectively, we first require a suitable way to compress the dense matrix into a format with a manageable number of entries. The next section introduces the hierarchically block separable (HBS) format \cite{hoFastSemidirectLeast2014} that serves as the foundation for our analysis.

\section{Hierarchically block separable matrices}\label{se:HBS_algorithm}
Solving large scale linear systems is a classical problem in scientific computing. While standard iterative methods like Bi-CGSTAB and GMRES are often applied, their computational efficiency is matrix dependent \cite{doi:10.1137/1.9781421407944}, which can lead to poor performance for the dense matrices that arise in many physical problems. An effective strategy is to exploit the specific structure hidden behind these matrices. In particular, matrices from the discretization of integral equations are not arbitrarily dense but possess a sparse structure. This property stems from the smoothness of the underlying kernel $K(x,y)$ for well separated points $x$ and $y$ in 2D or 3D. Important kernels in potential theory and wave physics that exhibit this behavior include:
\begin{enumerate}
    \item $\log \vert x-y \vert$, for 2D potential problems,
    \item $H^{(1)}_0(\kappa\vert x-y\vert)$,  for 2D wave scattering,
    \item $1/\vert x-y \vert$, for 3D potential problems,
    \item $e^{i\kappa|x-y|}/\vert x-y \vert$, for 3D wave scattering,
\end{enumerate}
where $\kappa>0$ is the wavenumber, and $H^{(1)}_0$ is the first kind Hankel function of order zero.

The HBS matrix format provides an efficient way to represent this structure \cite{hoFastDirectSolver2012, mindenRecursiveSkeletonizationFactorization2017}.  To define this structure formally, consider a dense matrix $A\in\mathbb{C}^{N\times N}$. Let its row and column index sets be $I = \{1, \dots, N\}$ and $J = \{1, \dots, N\}$, respectively. We begin by partitioning the row index set $I$ into $p$ disjoint blocks, $\{I_j\}_{j=1}^p$. The $j$-th row block, $I_j$, contains $n_j = |I_j|$ indices and is defined as:
\[I_j = \left\{ p_{j-1} + 1, p_{j-1}+2,\dots, p_j \right\}.\]
where $p_j = \sum_{k=1}^j n_k$ represents the cumulative size of the first $j$ blocks. By convention, $p_0=0$.
Similarly, the column index set $J$ is partitioned into $p$ blocks, $\{J_k\}_{k=1}^p$, where the $k$-th block, $J_k$, has a size of $n'_k = |J_k|$. Using this partitioning, the linear system $Ax=b$ can be expressed in the block form:
\[\sum_{j=1}^{p} A_{ij}x_j = b_i, \quad \text{for } i=1, \dots, p,\]
where $x_j \in \mathbb{C}^{n'_j}$ and $b_i \in \mathbb{C}^{n_i}$ are the corresponding blocks of the vectors $x$ and $b$, and $A_{ij} \in \mathbb{C}^{n_i \times n'_j}$ is the submatrix at the intersection of row block $I_i$ and column block $J_j$. Given an error tolerance $\varepsilon$, a matrix $A$ is defined as block separable if each of its off-diagonal blocks, $A_{ij}$ for $i \neq j$, admits a low-rank factorization of the form:
\[A_{ij} \approx A_{ij,\varepsilon} =  L_i S_{ij} R_j.\]
Here, $L_i \in \mathbb{C}^{n_i \times k_i^r}$ and $R_j \in \mathbb{C}^{k_j^c \times n_j'}$ are basis matrices, and $S_{ij} \in \mathbb{C}^{k_i^r \times k_j^c}$. The ranks $k_i^r$ and $k_j^c$ are assumed to be much more smaller than the block dimensions, i.e., $k_i^r \ll n_i$ and $k_j^c \ll n_j'$.

This sparse representation can be constructed using the interpolative decomposition (ID) \cite{halkoFindingStructureRandomness2011}, which is defined as follows:
\begin{definition}\label{def:ID}
    Let $A\in \mathbb{C}^{m\times n}$ be a matrix with a target rank $k < \min(m, n)$. A rank-$k$ ID of $A$ is a factorization of the form $A \approx A_\varepsilon = BP$ under a given error tolerance $\varepsilon$, where $B \in \mathbb{C}^{m\times k}$ is a skeleton matrix whose columns are a specific subset of $k$ columns from $A$. The matrix $P \in \mathbb{C}^{k\times n}$ is an interpolation matrix that contains the $k\times k$ identity matrix as a submatrix.
\end{definition}
Efficient randomized algorithms can compute the ID with a typical complexity of $O(mn\log(k) + k^2n)$ \cite{halkoFindingStructureRandomness2011}. ID is preferred over other low-rank methods like the singular value decomposition (SVD) for several reasons. Firstly, the skeleton matrix $B$ consists of actual columns from $A$, which preserves physical structure. Secondly, the interpolation matrix $P$ has a provably bounded norm, ensuring numerical stability. We use the ID algorithm to approximate the off-diagonal blocks $A_{ij}$ (for $i\neq j$) as $A_{ij} \approx L_iS_{ij}R_j$, where $S_{ij}\in\mathbb{C}^{k_i^r\times k_j^c}$ with ranks satisfying $k_i^r\ll n_i$ and $k_j^c\ll n_j$.

This recursive compression algorithm, illustrated in Figure \ref{fig:illustrate2}, constructs the hierarchical factors ($D_l, L_l, R_l$) level by level. Let $S_0 = A$ be the initial matrix. At each level $l$ of the hierarchy, the algorithm processes the current matrix $S_{l-1}$, which is partitioned into blocks based on the spatial tree. The blocks corresponding to near field, where the target $x$ and source $y$ are close, are typically dense, and these blocks are extracted and collectively form the diagonal factor $D_l$. The remaining matrix, which we denote as $B_l = S_{l-1} - D_l$, contains only the far field interactions. These far field blocks are presumed to be numerically low-rank. They are approximated using ID to identify a small set of skeleton indices and to construct basis matrices. To perform this efficiently, all far-field blocks within a given row partition share a common basis matrix, and similarly, all blocks within a column partition share a common column basis. The computational cost of these ID procedures can be significantly reduced by using proxy approximation, which replaces global computations with local ones involving a small set of sources \cite{xingInterpolativeDecompositionProxy2020}; a more detailed description is provided in Section \ref{apse:1.1}. These bases are then assembled into the block-diagonal factors $L_l$ and $R_l$. The far field part is thus approximated as $B_l \approx L_l S_l R_l$, where $S_l$ is the skeleton matrix. This new matrix $S_l$ is passed to the next level, $l+1$, for the same decomposition procedure. This process is repeated until a predefined recursion depth $\lambda$ is reached, and the final remaining skeleton matrix is set as $D_{\lambda+1}$.

\begin{figure*}[htbp]
\centering
\includegraphics[width=\textwidth]{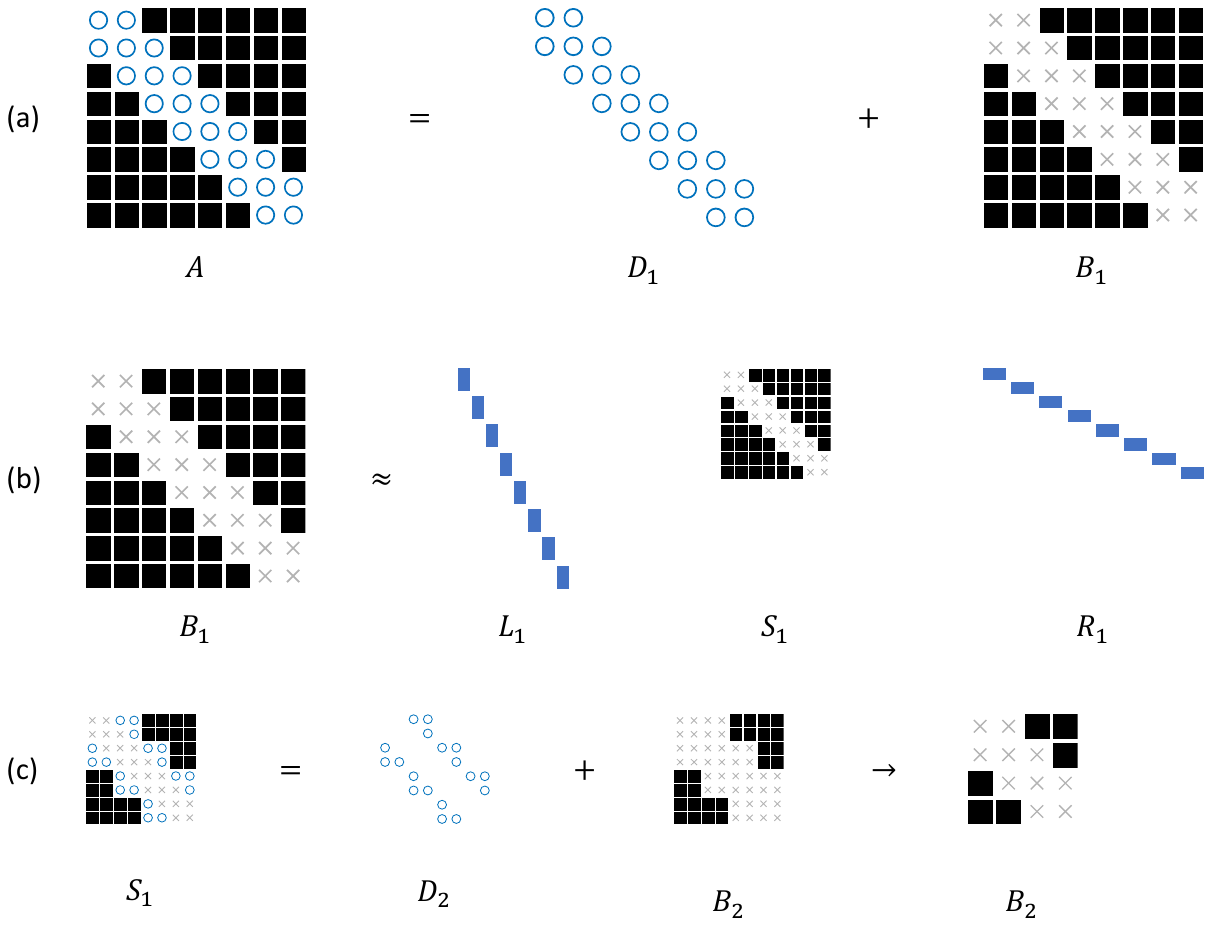}
\caption{\label{fig:illustrate2} This figure illustrates the algorithmic process of the recursive HBS factorization. At level 1, (a) the matrix $A$ is decomposed into its near-field blocks ($D_1$) and far-field remainder ($B_1$). (b) The far-field part is approximated as $B_1 \approx L_1 S_1 R_1$ using ID.  (c) The process is recursively applied to $S_1$, and the indices of the resulting far-field part $B_2$ are reordered for the subsequent level.}
\end{figure*}

The recursive application of this compression is guided by a partition tree. A standard approach is to recursively bisect the row and column index sets, as shown in Figure \ref{fig:illustrate1}. However, this index-based partitioning has a major limitation in higher dimensions. In 1D, nearby indices often correspond to physically close points, but this correspondence breaks down in 2D or 3D. Geometrically close points can have widely separated indices, which means their interaction blocks will appear off-diagonal. Such blocks are typically high-rank and cannot be compressed effectively \cite{hoFastSemidirectLeast2014}.

\begin{figure*}[htbp]
\centering
\includegraphics[width=\textwidth]{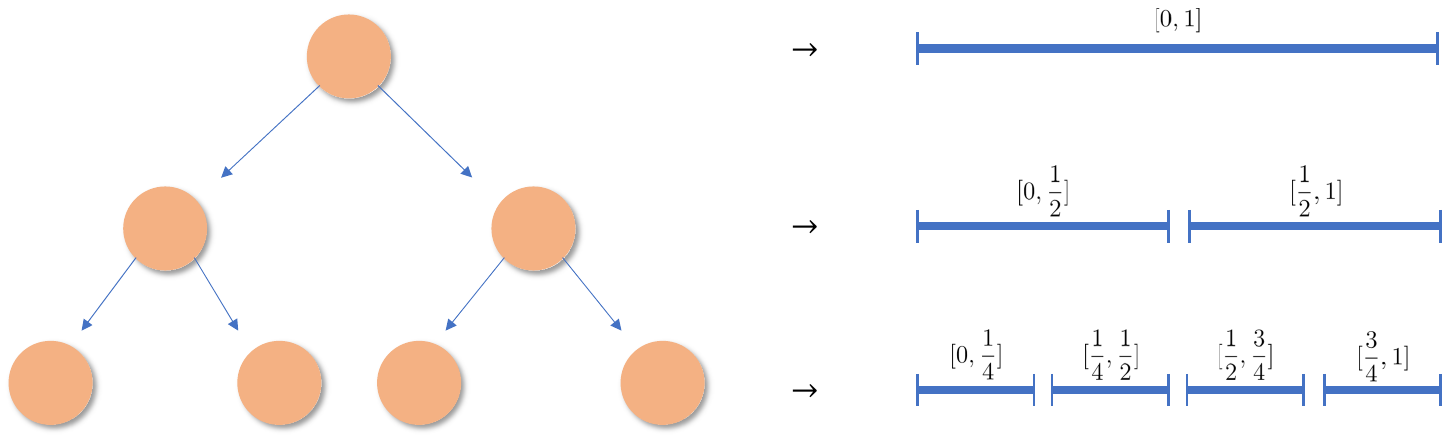}
\caption{\label{fig:illustrate1} An index tree for the interval $[0,1]$. The interval is first partitioned into $[0,\frac{1}{2}]$ and $[\frac{1}{2},1]$, and the two resulting nodes are then partitioned further. Each node at each level corresponds to an interval.}
\end{figure*}

To address this limitation, based on the HBS algorithm \cite{hoFastDirectSolver2012}, a more robust strategy uses the geometric coordinates of the underlying data points, which are typically available for matrices defined by a kernel function \cite{coulierInverseFastMultipole2017}. Instead of partitioning indices based on their ordering, the hierarchy is constructed directly from these spatial coordinates, for instance, by employing a quadtree in 2D or an octree in 3D. For matrices that are not strictly generated by a kernel function, precise coordinates are not strictly required for this partitioning; approximate coordinates are sufficient. For example, if the matrix represents a kernel-based system subject to a small perturbation, the coordinates associated with the original kernel function can still be effectively used. This hierarchical structure ensures that matrix blocks corresponding to well-separated spatial clusters are low rank and compressible, whereas blocks corresponding to adjacent points are generally treated as dense diagonal blocks at a given level. The procedure is summarized in Algorithm \ref{alg1} and illustrated in Figure \ref{fig:illustrate2}.

\begin{breakablealgorithm}
    \caption{Construction of HBS Factors via Spatial Partitioning}
    \label{alg1}
    \begin{algorithmic}[1]
        \REQUIRE 
        Matrix $A \in \mathbb{C}^{N \times N}$; 
        A spatial tree $\mathcal{T}$ partitioning the indices $\{1, \dots, N\}$;
        Recursion depth $\lambda$;
        Compression tolerance $\varepsilon$.
        
        \ENSURE 
        Diagonal blocks $D=\{D_1, \dots, D_\lambda, D_{\lambda+1}\}$; 
        Row basis matrices $L=\{L_1, \dots, L_\lambda\}$;
        Column basis matrices $R=\{R_1, \dots, R_\lambda\}$.

        \STATE $S \leftarrow A$
        \STATE $I_{\rm{active}} \leftarrow \{1, \dots, N\}$ \textit{// Set of active row indices for the current level}
        \STATE $J_{\rm{active}} \leftarrow \{1, \dots, N\}$ \textit{// Set of active column indices for the current level}
        
        \FOR{$k=1$ to $\lambda$}
            \STATE Partition $I_{active}$ and $J_{active}$ into blocks $\{\mathcal{I}_p\}$ and $\{\mathcal{J}_q\}$ based on level $k$ of the tree $\mathcal{T}$.
            
            \STATE \textit{// Extract near-field blocks to form $D_k$}
            \STATE $D_k \leftarrow$ Submatrix of $S$ formed by all near-field blocks at level $k$.
            \STATE $S\leftarrow S-D_k$ \textit{// Remove the diagonal elements}
            
            \STATE \textit{// Compress blocks to find skeletons and bases}
            \STATE $I_{\rm{skel}} \leftarrow \emptyset$, $J_{\rm{skel}} \leftarrow \emptyset$, $U_{\rm{list}} \leftarrow \{\}$, $V_{\rm{list}} \leftarrow \{\}$
            \FOR{each row block $\mathcal{I}_p$}
                \STATE Apply ID to the submatrix $S(\mathcal{I}_p, \mathcal{J}^{(p)}_{\rm{far}})$ with tolerance $\varepsilon$.
                \STATE Let $\mathcal{I}'_p \subseteq \mathcal{I}_p$ be the skeleton row indices and $U_p$ be the corresponding basis matrix.
                \STATE $I_{\rm{skel}} \leftarrow I_{\rm{skel}} \cup \mathcal{I}'_p$
                \STATE Append $U_p$ to $U_{\rm list}$.
            \ENDFOR
            \FOR{each column block $\mathcal{J}_q$}
                \STATE Apply ID to the submatrix $S(\mathcal{I}^{(q)}_{\rm{far}}, \mathcal{J}_{q})$ with tolerance $\varepsilon$.
                \STATE Let $\mathcal{J}'_q \subseteq \mathcal{J}_q$ be the skeleton column indices and $V_q$ be the corresponding basis matrix.
                \STATE $J_{\rm{skel}} \leftarrow J_{\rm{skel}} \cup \mathcal{J}'_q$
                \STATE Append $V_q$ to $V_{\rm{list}}$.
            \ENDFOR

            \STATE \textit{// Assemble factors and update for the next level}
            \STATE $L_k \leftarrow \text{blkdiag}(U_p \in U_{\rm{list}})$ \textit{// Assemble row bases into a block-diagonal matrix}
            \STATE $R_k \leftarrow \text{blkdiag}(V_q \in V_{\rm{list}})$ \textit{// Assemble column bases into a block-diagonal matrix}
            \STATE $S \leftarrow S(I_{\rm{skel}}, J_{\rm{skel}})$ \textit{// Update S to be the skeleton matrix}
            \STATE $I_{\rm{active}} \leftarrow I_{\rm{skel}}$, $J_{\rm{active}} \leftarrow J_{\rm{skel}}$
        \ENDFOR
        
        \STATE $D_{\lambda+1} \leftarrow S$ \textit{// The final matrix is the last diagonal block}
    \end{algorithmic}
\end{breakablealgorithm}

This process yields the telescoping representation
\begin{equation}\label{eq:recursive}
    A\approx D_1+L_1[D_2+L_2(\cdots D_\lambda+L_\lambda D_{\lambda+1} R_\lambda\cdots)R_2]R_1,
\end{equation}
where $\lambda$ is the maximum depth of the hierarchy. The recursion is typically terminated when the matrix blocks at the final level reach a predefined size $M$. It determines the depth as $\lambda=O(\log N)$. 

To analyze the complexity and sparsity of the algorithm, denote each level $l=1,\dots,\lambda$, where $l=1$ corresponds to the finest partition. Let $p_l$ be the number of blocks along one dimension, $n_l$ be the uncompressed block size, and $k_l$ be the compressed block size (i.e., the rank). For simplicity, we assume these parameters are uniform for all blocks at a given level. The spatial dimension is denoted by $d$. Our analysis is based on the following assumptions:
\begin{enumerate}[label=(\roman*)]
    \item The total matrix dimension is $N = p_1 n_1$. We assume the finest-level blocks contain a constant number of points $M$, so $n_1 = O(1)$, which implies $p_1 = O(N)$. For simplicity, we set $p_1=2^{d\lambda}$.
    \item At each level of the hierarchy, a parent block is divided into $2^d$ children. This means the number of blocks is divided by $2^d$ at each coarser level, so $p_l = p_1/2^{d(l-1)}$. Since $p_\lambda = O(1)$, the total number of levels is $\lambda=\lfloor \log_2(N/M)/d\rfloor+1= O(\log N)$.
    \item The size of the matrix at level $l$ is determined by the number of skeleton points from level $l-1$. This gives the relation $p_l n_l = p_{l-1} k_{l-1}$, so $n_l = 2^dk_{l-1}$.
    \item The ranks of the off-diagonal blocks corresponding to well-separated points are bounded by a constant $r$ that depends on the desired precision but not on the matrix size $N$. We therefore assume $k_l = r$ for all levels $l$.\label{assump}
\end{enumerate}

Assumption \ref{assump} is justified for matrices generated by smooth kernel functions $K(x,y)$. The Taylor expansion of such a kernel shows that for well-separated domains, it can be approximated by a finite sum of separable functions:
\begin{equation}
\begin{aligned}
    K(x,y)&\approx\sum\limits_{\vert {\bm{\beta}}\vert=0}^{r-1}\frac{(y-y_0)^{\bm{\beta}}}{{\bm{\beta}} !}D^{\bm{\beta}}_2K(x,y_0)\\
    &\approx\sum\limits_{\vert {\bm{\alpha}}\vert=0}^{r-1}\sum\limits_{\vert {\bm{\beta}}\vert=0}^{r-1}\frac{(x-x_0)^{\bm{\alpha}}}{{\bm{\alpha}} !}\frac{(y-y_0)^{\bm{\beta}}}{{\bm{\beta}} !}D_1^{\bm{\alpha}} D^{\bm{\beta}}_2K(x_0,y_0).
\end{aligned}
\end{equation}
Truncating this expansion yields a low-rank approximation where the rank $r$ is determined by the required accuracy. This is the same underlying principle that enables the efficiency of algorithms like the fast multipole method \cite{bebendorfHierarchicalMatricesMeans2008}.

The classical procedure for constructing the decomposition of matrix $A$ leads to the following result regarding its computational cost and sparsity. 
\begin{theorem}\label{thm:hss_sparsification_properties}
Let $A \in \mathbb{C}^{N \times N}$ be a matrix that has a hierarchical representation of depth $\lambda$. If this representation is built using ID and a strong rank-revealing QR (RRQR) algorithm with parameter $f$, then the construction has a computational cost of 
\begin{equation}
    T \sim 2^{d+1}(mr\log r+r^3)(2^{d(\lambda+1)}-2^d)/(2^d-1)\sim O(N),
\end{equation}
where $r$ is the maximum rank of the decompositions and $m$ is the proxy set size. The entries of the resulting matrices are bounded by a constant $c$. The matrices are sparse, containing a total of 
\begin{widetext}
\begin{equation}
s = 3^dn_1N+2^d[(2^{d+1}r^2-2r^2+4r)(2^{d\lambda}-1)+4^d(6^d-3^d)r^2(2^{d\lambda}-1)]/(2^d-1) \sim O(N)
\end{equation}
\end{widetext}
nonzero elements.
\end{theorem}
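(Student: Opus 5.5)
The proof splits into three counting and bounding tasks, each carried out level by level using assumptions (i)--(iv): the construction cost, the uniform entry bound, and the nonzero count. In every case we sum a geometric series over $l=1,\dots,\lambda$ with $p_l = 2^{d(\lambda-l+1)}$ blocks per dimension at level $l$, so that the totals are $\sum_l 2^{dl}=2^d(2^{d\lambda}-1)/(2^d-1)$, or shifted versions of it. Since $2^{d\lambda}=p_1=O(N)$, each bound is $O(N)$.

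\emph{Cost.} At each level, every block requires two IDs, one for its row basis and one for its column basis. With proxy compression (Section~\ref{apse:1.1}), the far field of a block is replaced by a proxy set of $m$ points together with the near neighbours. Each ID is then applied to a matrix with $O(r)$ rows; at level $l\ge2$ there are $n_l=2^d r$ candidate indices, and the matrix has $O(m)$ columns. A randomized or strong RRQR ID on such a matrix costs $O(mr\log r+r^3)$. Here $mr\log r$ is the cost of the sketch and $r^3$ is the cost of the QR plus the triangular solve for the interpolation matrix; the factor $2^d$ from $n_l=2^dr$ is absorbed into the constant. Multiplying by $2$ (rows and columns) and by the number of blocks per level, $p_l^{\,}$ in the tree sense, gives the per-level cost. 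Summing over levels gives $2^{d+1}(mr\log r+r^3)\sum_{l=1}^{\lambda}2^{dl}$, which is the stated expression $2^{d+1}(mr\log r+r^3)(2^{d(\lambda+1)}-2^d)/(2^d-1)$.

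\emph{Entry bounds.} The entries of $L_l$ and $R_l$ are entries of ID interpolation matrices. For strong RRQR with parameter $f$, the Gu--Eisenstat guarantee $|(R_{11}^{-1}R_{12})_{ij}|\le f$ bounds every interpolation coefficient by $f$. The identity part contributes entries equal to $1$. The entries of $D_l$ and $D_{\lambda+1}$ are original entries of $A$ restricted to skeleton index sets, because skeletons are actual rows and columns. They are therefore bounded by $\max|a_{ij}|$. Hence $c=\max(1,f,\max|a_{ij}|)$ works.

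\emph{Nonzeros.} $D_1$ has, per finest block row, $3^d$ neighbouring blocks of size $n_1\times n_1$, which gives $3^dn_1N$ nonzeros. For $l\ge2$, each of the $p_l$ blocks of $L_l$ (and likewise $R_l$) is $n_l\times r$ with $n_l=2^dr$. Its nonzeros are an $r\times r$ identity plus an $(n_l-r)\times r$ dense interpolation block, giving $(2^d r - r)r + r$ per block for each factor; this is the source of the $2^{d+1}r^2-2r^2+4r$-type term once rows, columns and the bookkeeping at adjacent levels are combined. For $D_l$, $l\ge 2$, the near-field interactions that must be stored are those between a block and the children of its parent's neighbours that are not its own neighbours. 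This interaction list has $6^d-3^d$ entries in $d$ dimensions, each an $r\times r$ skeleton block; the factor $4^d$ comes from expressing this count in the skeleton-per-parent normalization. Summing each type over levels with the same geometric series produces $s$.

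The main obstacle is this last count. It requires fixing the exact index-set conventions of Algorithm~\ref{alg1}: which blocks are classified as near field at level $l$ versus $l-1$, and how skeleton blocks are grouped by parent. Only with these conventions fixed do the constants $4^d(6^d-3^d)$ and $2^{d+1}r^2-2r^2+4r$ come out exactly. I would first verify them for $d=1$ by drawing the interaction lists explicitly, and then argue by tensor product for general $d$. The asymptotic claim $s=O(N)$ is robust regardless of these constants, since every level contributes $O(p_l)$ blocks of constant size.
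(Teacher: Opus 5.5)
Your route is the paper's: proxy compression makes every ID act on a constant-size matrix, costs and nonzeros are summed level by level using $\sum_{l=1}^{\lambda}p_l=(2^{d(\lambda+1)}-2^d)/(2^d-1)$, and strong RRQR bounds the interpolation entries by $f$. Your $O(N)$ conclusions and the entry bound are fine. The gap is the explicit formula for $s$. You do not derive it, and your explanations of its constants are not where they come from. In the paper the count per level is $2k_lp_l+2k_l(n_l-k_l+1)p_l+(6^d-3^d)n_l^2p_l$, evaluated at $k_l=r$, $n_l=2^dr$ for every $l$:
\begin{itemize}
\item The middle term is your $L_l,R_l$ count, $2^{d+1}r^2-2r^2+2r$ per block for the two factors together.
\item The remaining $2r$ in $2^{d+1}r^2-2r^2+4r$ is the term $2k_lp_l$. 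It comes from the two $-I$ blocks of size $k_lp_l$ that couple $y_l,z_l$ in $A_{\rm sp}$. So the count is of $A_{\rm sp}$, not only of $\{D_l,L_l,R_l\}$, and your ``bookkeeping at adjacent levels'' should be replaced by exactly these identity blocks.
\item The factor $4^d$ is simply $n_l^2=(2^dr)^2$. The paper charges each of the $6^d-3^d$ near-field blocks in a block row of $D_l$ as a full $n_l\times n_l$ block.
\end{itemize}

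Your own convention, interaction-list blocks of size $r\times r$ per child box, gives $2^dp_l\cdot(6^d-3^d)r^2=2^d(6^d-3^d)r^2p_l$ per level. That is a factor $2^d$ smaller than the paper's term, and no ``skeleton-per-parent normalization'' turns it into $4^d$. In particular, the $d=1$ check you plan would give $6r^2p_l$ rather than $12r^2p_l$, so it would not confirm the stated constant. Your sharper count does show that the nonzeros are at most the stated $s$, and $O(N)$, which is what matters downstream. To obtain the formula as stated, though, you must adopt the paper's coarser $n_l\times n_l$ accounting.

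There is also a smaller slip in the cost. The $2^{d+1}$ in $T$ is exactly $2\cdot n_l/r$, coming from the per-ID cost $mn_l\log r+r^2n_l=2^d(mr\log r+r^3)$. If you absorb the $2^d$ into the constant as you say, you get $2(mr\log r+r^3)\sum_l p_l$, not the stated expression.
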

\begin{proof}
	The statement follows from three components in hierarchical matrix compression: 
	\begin{enumerate}[label=(\arabic*)]
	\item Proxy approximation to localize the ID computations (Section~\ref{apse:1.1}).
	\item A level-by-level count of nonzeros and ID costs under uniformly bounded ranks (Section~\ref{apse:1.2}).
	\item Entrywise control of the interpolation matrices produced by strong RRQR (Section~\ref{apse:1.3}). 
\end{enumerate}
\end{proof}

Theorem \ref{thm:hss_sparsification_properties} states that both the classical preparation cost and the number of nonzero elements in the resulting matrices scale linearly with the matrix dimension $N$. Additionally, the elements of these matrices are uniformly bounded. These properties are essential for constructing sparse matrix oracles and analyzing their quantum costs later in this paper.

In the following subsections, we detail the hierarchical compression procedure that leads to these bounds. We begin by introducing the proxy approximation, which reduces the global compression step to a local one (Section~\ref{apse:1.1}). Next, we calculate the nonzero count and runtime (Section~\ref{apse:1.2}), and finally, we explain how the strong RRQR algorithm ensures bounded matrix entries (Section~\ref{apse:1.3}).

\subsection{Proxy Approximation}\label{apse:1.1}
A major computational cost in the algorithm comes from the compression step. For each block, finding the basis vectors requires information from all other blocks in the same row or column, which is a global operation. If the matrix structure is arbitrary, this is unavoidable. However, for many matrices that come from physical problems, this global work can be replaced with a much cheaper local computation. The main idea, known as proxy approximation \cite{xingInterpolativeDecompositionProxy2020}, is to replace the influence of all distant parts of the system with an equivalent set of sources on a small, artificial surface.

As shown in Figure \ref{fig:proxy}, to compress a block corresponding to a cluster of points $X_0$, we no longer need to consider all other clusters $Y_0$. Instead, we only need to account for its immediate neighbors and a nearby ``proxy surface" $\Gamma_0$ that separates the local region from the rest of the domain. The field generated by all distant sources can be replicated by an equivalent distribution on this proxy surface. As a result, the compression becomes a local step. The basis for a block is found by interacting it only with its neighbors and the proxy surface, not the entire system. Since the number of points needed for the proxy surface is much smaller and does not depend on the total problem size, this approach can greatly reduce the computational cost.
\begin{figure*}
	\centering
	\includegraphics[width=0.35\linewidth]{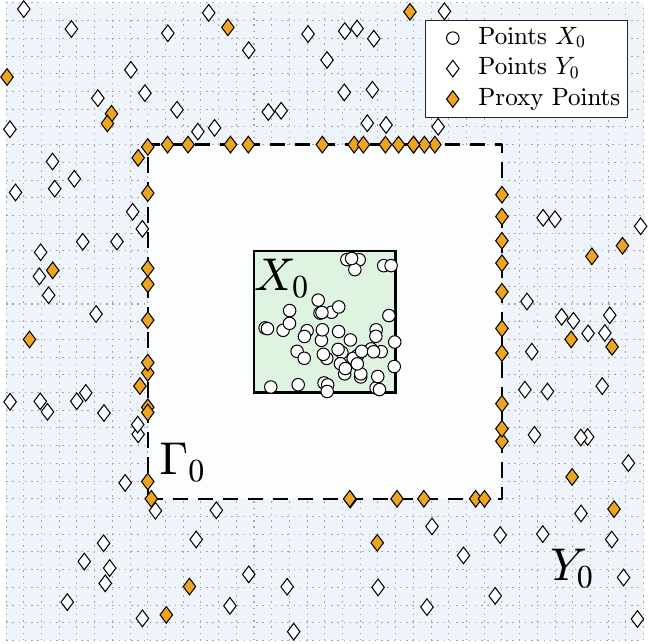}
	\caption{An illustration of proxy approximation. }
	\label{fig:proxy}
\end{figure*}

\subsection{Computational Complexity}\label{apse:1.2}
In this section, we analyze the storage and computational complexity of constructing the hierarchical representation. The key metrics are determined by the properties of the factor matrices $\{D_l, L_l, R_l\}$. The total number of nonzero elements in the extended sparse matrix $A_{\text{sp}}$ is of the same order as the sum of nonzeros in these factors. Therefore, an analysis of the storage for these factors provides an effective bound on the overall storage cost of $A_{\text{sp}}$ and allows us to verify that its row and column sparsity is bounded by a constant.

We estimate the total number of nonzero elements, $s$, by summing the contributions from all blocks. The total number of nonzeros in the diagonal blocks is $3^d n_1^2 p_1$ for $l=1$ and $(6^d-3^d) n_l^2 p_l$ for $l>1$. The interpolation matrices and other sparse components also contribute to $s$. Summing these quantities over all levels gives the total number of nonzero elements:
\begin{widetext}
	\begin{equation}
		\begin{aligned}
			s&=3^dn_1^2p_1+\sum\limits_{l=1}^\lambda (2k_lp_l+2k_l(n_l-k_l+1)p_l+(6^d-3^d)n_l^2p_l)\\
			&=3^dn_1N+(2^{d+1}r^2-2r^2+4r)(2^{d(\lambda+1)}-2^d)/(2^d-1)+4^d(6^d-3^d)r^2(2^{d(\lambda+1)}-2^d)/(2^d-1)\\
			&\sim O(N).
		\end{aligned}
	\end{equation}
\end{widetext}
This linear scaling shows that the extended sparsification reduces the storage requirement from ${O}(N^2)$ for the original dense matrix to ${O}(N)$.

The computational cost is dominated by the construction of the hierarchical representation, specifically the ID performed at each level. The cost of a rank-$k$ ID on an $m \times n$ matrix is typically ${O}(mn\log k+k^2n)$. Summing this over all blocks and levels gives the total complexity for the compression step:
\begin{equation}
	T\sim\sum\limits_{l=1}^\lambda 2p_l(m_ln_l\log k_l+k_l^2n_l).
\end{equation}

The key to achieving linear complexity is the use of proxy points, as in Section \ref{apse:1.1}. With this approach, the ID for a given block is not performed against all other blocks in the system. Instead, it is only performed against a much smaller matrix formed by its near-neighbors and the proxy points. This reduces the dimension $m_l$ in the cost formula from ${O}(N)$ to a small constant, $m$, that depends only on the local geometry. Since the block sizes $n_l$ and ranks $k_l$ are also bounded by constants ($2^dr$ and $r$), the cost to compress a single block becomes constant. The total cost is then simplified to:
\begin{widetext}
	\begin{equation}
		T\sim2(2^dmr\log r+2^{d}r^3)\sum\limits_{l=1}^\lambda p_l=2^{d+1}(mr\log r+r^3)(2^{d(\lambda+1)}-2^d)/(2^d-1)\sim O(N).
	\end{equation}
\end{widetext}
Therefore, the total computational cost for constructing the hierarchical representation scales linearly with the problem size, $N$.

\subsection{Element Bound}\label{apse:1.3}
For the subsequent analysis, it is important that the entries of the matrices in the HBS representation are bounded. This bound can be guaranteed by constructing the ID using a strong rank-revealing QR (RRQR) algorithm \cite{guEfficientAlgorithmsComputing1996}. We first formalize the definition of this factorization.
\begin{definition}[Strong RRQR]\label{def:RRQR_def}
	Given a matrix $M\in\mathbb{R}^{m\times n}$, we consider partial QR factorizations of the form 
	\begin{equation*}
		M\Pi=QR\equiv Q\begin{pmatrix}
			A_k&B_k\\
			&C_k
		\end{pmatrix},
	\end{equation*}
	where $Q\in\mathbb{R}^{m\times m}$ is orthogonal, $A_k\in\mathbb{R}^{k\times k}$ is upper triangular with nonnegative diagonal elements, $B_k\in\mathbb{R}^{k\times (n-k)}, C_k\in\mathbb{R}^{(m-k)\times (n-k)}$, and $\Pi\in\mathbb{R}^{n\times n}$ is a permutation matrix.
	We call the factorization a strong RRQR if it satisfies 
	\[\sigma_i(A_k)\geq\frac{\sigma_i(M)}{q_1(k,n)}\quad {\rm {and}}\quad \sigma_j(C_k)\leq \sigma_{k+j}(M)q_1(k,n)\]
	and
	\[\vert (A_k^{-1}B_k)_{i,j}\vert\leq q_2(k,n),\]
	for $1\leq i\leq k$ and $1\leq j\leq n-k$, where $q_1(k,n)$ and $q_2(k,n)$ are functions bounded by low-degree polynomials in $k$ and $n$.
\end{definition}
Gu and Eisenstat \cite{guEfficientAlgorithmsComputing1996} proved that such a factorization can be computed.
\begin{theorem}\label{apth:RRQR}
	For any given parameter $f\geq 1$, there exists an algorithm that computes a strong RRQR factorization satisfying the conditions in Definition \ref{def:RRQR_def} with
	\[q_1(k,n)=\sqrt{1+f^2k(n-k)}\quad {\rm {and}}\quad q_2(k,n)=f.\]
\end{theorem}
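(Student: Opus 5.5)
We would follow Gu and Eisenstat's argument. It has two parts: an existence argument for a pivot choice satisfying a swap-stability condition, and an algebraic derivation of the bounds from that condition. Throughout, write $R=\begin{pmatrix}A_k&B_k\\&C_k\end{pmatrix}$ and note that $\sigma_i(R)=\sigma_i(M)$, because $Q$ is orthogonal and $\Pi$ is a permutation. We also introduce
\begin{itemize}
\item $\rho_{ij}=(A_k^{-1}B_k)_{ij}$;
\item $\gamma_j$, the $2$-norm of the $j$-th column of $C_k$;
\item $\omega_i$, the $2$-norm of the $i$-th row of $A_k^{-1}$.
\end{itemize}
If $\mathrm{rank}(M)<k$ the bounds are degenerate, so we assume $\sigma_k(M)>0$ and restrict to permutations with $A_k$ nonsingular.

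\textbf{Step 1 (key identity).} Let $\tilde A_k$ be the leading triangular block obtained by interchanging column $i$ of the leading block with column $j$ of the trailing block and retriangularizing. We would verify the classical identity
\[
\frac{|\det\tilde A_k|}{|\det A_k|}=\sqrt{\rho_{ij}^2+(\gamma_j\omega_i)^2}.
\]
The plan is to reduce to the case where column $i$ is moved to position $k$. This is done by a cyclic permutation followed by Givens retriangularization; neither changes $|\det A_k|$, and both transform $\rho,\gamma,\omega$ covariantly. In that case the identity is a direct computation with the last row of $A_k^{-1}$.

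\textbf{Step 2 (existence).} Among the finitely many permutations $\Pi$, choose one maximizing $|\det A_k|$. By Step 1, maximality gives
\[
\rho_{ij}^2+\gamma_j^2\omega_i^2\le 1\le f^2 \quad\text{for all } i,j.
\]
For $f>1$ the same bound is achieved algorithmically. Start from QR with column pivoting and perform a swap whenever some quantity exceeds $f$. Each swap multiplies $|\det A_k|$ by more than $f$, and the determinant is bounded above, so the procedure terminates. Either way the entrywise bound $|(A_k^{-1}B_k)_{ij}|\le f=q_2$ is immediate. Summing the condition over all $i,j$ gives
\[
\|A_k^{-1}B_k\|_F^2+\|C_k\|_F^2\,\|A_k^{-1}\|_F^2\le f^2k(n-k).
\]
Hence $\|A_k^{-1}B_k\|_2^2+\|C_k\|_2^2\|A_k^{-1}\|_2^2\le f^2k(n-k)$.

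\textbf{Step 3 (singular value bounds).} Set $\alpha=\sigma_k(A_k)$ and factor
\[
R=\begin{pmatrix}A_k&0\\0&\alpha I\end{pmatrix}W,\qquad W=\begin{pmatrix}I&A_k^{-1}B_k\\0&C_k/\alpha\end{pmatrix}.
\]
The $k$ largest singular values of the left factor are those of $A_k$, so $\sigma_i(M)\le\sigma_i(A_k)\|W\|$. The bound from Step 2 gives
\[
\|W\|^2\le 1+\|A_k^{-1}B_k\|^2+\|C_k\|^2/\sigma_k(A_k)^2\le 1+f^2k(n-k),
\]
which yields the first inequality with $q_1=\sqrt{1+f^2k(n-k)}$.

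For the second inequality, set $\beta=\|C_k\|_2$ and use the identity
\[
R\begin{pmatrix}\beta A_k^{-1}&-A_k^{-1}B_k\\0&I\end{pmatrix}=\begin{pmatrix}\beta I&0\\0&C_k\end{pmatrix}.
\]
Since $\beta\ge\sigma_1(C_k)$, the $(k+j)$-th singular value of the right-hand side is $\sigma_j(C_k)$. Therefore $\sigma_j(C_k)\le\sigma_{k+j}(M)\,\|\hat W\|$, where $\hat W$ is the second factor on the left. Its norm satisfies
\[
\|\hat W\|^2\le 1+\|A_k^{-1}B_k\|^2+\beta^2\|A_k^{-1}\|^2\le q_1^2.
\]

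The main obstacle is Step 1, the determinant-ratio identity under a column swap. The remaining steps are finite maximization and norm bookkeeping; in particular, the singular value inequalities rest on the two factorizations in Step 3 together with the multiplicative inequality $\sigma_i(XY)\le\sigma_i(X)\|Y\|$.
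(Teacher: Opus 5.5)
Your proposal is correct and is essentially Gu and Eisenstat's own argument. It uses the determinant-ratio identity for a single column interchange, existence via a maximal-$|\det A_k|$ pivot choice (or the swap-until-stable iteration for $f>1$), and the two block factorizations of $R$ that turn $\|A_k^{-1}B_k\|_2^2+\|C_k\|_2^2\|A_k^{-1}\|_2^2\le f^2k(n-k)$ into the two singular value inequalities; the paper itself gives no proof and only cites that work.
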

Theorem \ref{apth:RRQR} ensures that the entries of the interpolation matrix, which corresponds to $A_k^{-1}B_k$ in the factorization, are bounded by a predefined constant $f$. The entries of the diagonal blocks $D_l$ are taken directly from the original matrix $A$ and are therefore also assumed to be bounded. In conclusion, the elements in $A_{\text{sp}}$ are bounded by a constant $c_{\rm sp}$.

\section{A Sparsification Based Quantum Approach}\label{se:sparsification_quantum}

The decomposition algorithm from Section \ref{se:HBS_algorithm} provides an effective method for creating a sparse representation of a matrix. However, this process yields a set of matrices, whereas a block encoding requires a single matrix. To address this, we use an extended sparsification technique to combine these matrices into a single but larger sparse matrix.

The HBS decomposition yields factors $L, D,$ and $R$ that are either sparse or much smaller than the original matrix. This property holds at each level of the hierarchy, allowing us to approximate the original matrix $A$ with controlled accuracy. We begin by considering a single level of compression. In this case, the matrix $A$ is approximated as $A \approx D+LSR$. The linear system $Ax=b$ can then be solved by replacing $A$ with this approximation:
\[(D+LSR)x=b.\]
To solve this system, we introduce the auxiliary variables $z=Rx$ and $y=Sz$, which allows us to write the system in the form
\begin{equation*}
    \begin{pmatrix}
        D & L & \\
        R & & -I\\
         & -I& S\\
    \end{pmatrix}
    \begin{pmatrix}
        x\\
        y\\
        z
    \end{pmatrix}
    =\begin{pmatrix}
        b\\
        0\\
        0
    \end{pmatrix}.
\end{equation*}

Applying this procedure recursively to the telescoping representation in Equation \eqref{eq:recursive} yields the following extended sparse system:
\begin{widetext}
\begin{equation}\label{eq:extended}
    \begin{pmatrix}
        D_1&L_1&&&&&\\
        R_1&&-I&&&&\\
        &-I&D_2&L_2&&\\
        &&R_2&\ddots&\ddots&&\\
        &&&\ddots&D_\lambda&L_\lambda&\\
        &&&&R_\lambda&&-I\\
        &&&&&-I&D_{\lambda+1}\\
    \end{pmatrix}
    \begin{pmatrix}
        x\\
        y_1\\
        z_1\\
        \vdots\\
        \vdots\\
        y_\lambda\\
        z_\lambda\\
    \end{pmatrix}=
    \begin{pmatrix}
        b\\
        0\\
        0\\
        \vdots\\
        \vdots\\
        0\\
        0\\
    \end{pmatrix}.
\end{equation}
\end{widetext}
We denote the block matrix in this system by $A_{\text{sp}}$ and refer to it as the extended sparsification of $A$. This structure has been employed in \cite{hoFastDirectSolver2012} for the construction of a fast direct solver for boundary integral equations.

To be used in a quantum algorithm, we must analyze the properties of $A_{\text{sp}}$. As shown in Theorem \ref{thm:hss_sparsification_properties}, the total number of nonzero elements in $A_{\text{sp}}$ scales as $O(N)$. We now consider its row and column sparsity. The nonzero entries in any given row of $A_{\text{sp}}$ come from at most three types of block matrices: a diagonal block $D_l$, the interpolation matrices $L_l$ and $R_{l-1}$, and an identity matrix $I$. By construction, the interpolation matrices at level $l$ are sparse, with at most $k_l$ or $k_l+1$ nonzeros per row or column. The diagonal blocks $D_l$ contain near-neighbor interactions. At the finest level ($l=1$), a block interacts with its $3^d-1$ neighbors, so each row in $D_1$ has at most $3^d n_1$ nonzeros. For coarser levels ($l>1$), a block interacts with its $(6^d-3^d)$ neighbors, resulting in at most $(6^d-3^d)n_l$ nonzeros per row in $D_l$. Combining these bounds, the maximum number of nonzeros in any row of $A_{\text{sp}}$ is
\begin{widetext}
\begin{equation}
	s_r=\max\{3^dn_1+k_1,n_1-k_1+1,(6^d-3^d)k_1+k_2+1,n_2-k_2+1,\cdots,n_\lambda-k_\lambda+1,(6^d-3^d)k_{\lambda}+1\}.
\end{equation}
\end{widetext}
Since $n_l$ and $k_l$ are constants for all levels, the row sparsity $s_r$ is bounded by a constant. A similar argument holds for the column sparsity $s_c$.

Finally, we consider the dimension of the matrix $A_{\text{sp}}$. The total number of rows, $N_{sp}$, is the sum of the dimensions of all the subblocks, which also scales linearly with $N$:
\begin{equation}
    N_{\text{sp}}=\sum\limits_{l=1}^{\lambda}2p_lk_l+p_1n_1\leq 2r(2^{d(\lambda+1)}-2^d)+N\sim O(N).
\end{equation}

With these properties established, we can determine the cost of creating a block encoding for $A_{\text{sp}}$. Let $c_{\text{sp}}$ be the upper bound on the magnitude of elements in the original matrix $A$, and let $f$ be the parameter defined in Theorem \ref{apth:RRQR} for the strong RRQR algorithm~\cite{guEfficientAlgorithmsComputing1996}. Based on the conditions of Theorem \ref{apth:block_encoding}, we arrive at the following theorem:
\begin{theorem}\label{thm:sparsification_block}
    Let $A$ be a matrix represented hierarchically to depth $\lambda$ with the extended sparsification $A_{\rm{sp}}=(a_{ij})_{N_{\rm{sp}}\times N_{\rm{sp}}}$ that is $s_r$-row-sparse and $s_c$-column-sparse. Then, one can implement an $(c_{\rm{sp}}\sqrt{s_rs_c},O(n),\varepsilon)$ block encoding of $A_{\rm{sp}}$ with a single query to $\mathcal{O}_r$ and $\mathcal{O}_c$, two queries to $\mathcal{O}_A$, and $O(n+{\rm polylog}(\frac{c_{\rm sp}\sqrt{s_r s_c}}{\varepsilon}))$ additional one and two qubit gates, using $O(b,{\rm polylog}(\frac{c_{\rm sp}\sqrt{s_r s_c}}{\varepsilon}))$ ancilla qubits. The oracles $O_r,O_c,O_A$ provide access to $O(N)$ nonzero matrix entries, where $N$ is the dimension of $A$.
\end{theorem}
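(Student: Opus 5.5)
The proof is a direct application of Lemma~\ref{apth:block_encoding} to $A_{\rm sp}$. The work is to check that $A_{\rm sp}$ satisfies the lemma's hypotheses with the stated parameters. First, pad $A_{\rm sp}$ with zeros (or identity rows and columns, which adds at most one nonzero per row and column) up to dimension $2^n$, where $n=\lceil\log_2 N_{\rm sp}\rceil$. Since $N_{\rm sp}\le 2r(2^{d(\lambda+1)}-2^d)+N=O(N)$, we get $n=O(\log N)$, and the $n+3$ ancilla count of the lemma is $O(n)$. The padding leaves the spectrum of the relevant block unchanged and does not increase the row or column sparsity beyond $\max(s_r,1)$ and $\max(s_c,1)$.

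Second, establish the two sparsity parameters and the entry bound. The row sparsity $s_r$ is given by the displayed maximum over the block types $D_l$, $L_l$, $R_{l-1}$ and $\pm I$. The column sparsity $s_c$ follows by the symmetric argument with the roles of $L_l$ and $R_l$ exchanged. Both are constants independent of $N$ because $n_l=2^dr$ and $k_l=r$. For the entry bound $\hat a$, I use Section~\ref{apse:1.3}. The entries of $D_l$ are entries of $A$, since the $D_l$ are extracted submatrices of skeleton matrices, which are themselves submatrices of $A$. The entries of $L_l$ and $R_l$ are either $0$, $1$ (from the identity part of the ID), or entries of $A_k^{-1}B_k$, which Theorem~\ref{apth:RRQR} bounds by $f$. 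The $-I$ blocks contribute magnitude $1$. Hence $\hat a:=c_{\rm sp}=\max\{\max_{ij}|A_{ij}|,f,1\}$ bounds every entry. This is the one point needing care: the statement calls $c_{\rm sp}$ a bound on the entries of $A$, but it must also dominate $f$ and $1$. I would define $c_{\rm sp}$ this way explicitly, as the paper does at the end of Section~\ref{apse:1.3}.

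Third, describe the oracles. $\mathcal{O}_r$ and $\mathcal{O}_c$ list the column and row indices of the nonzeros in each row and column. These positions are determined by the tree: the near-field neighbor lists, the skeleton index sets, and the fixed block offsets of the extended system. $\mathcal{O}_A$ returns the $b$-bit fixed-point value of $a_{ij}/c_{\rm sp}$. Each oracle is a lookup over the nonzero set of $A_{\rm sp}$, and Theorem~\ref{thm:hss_sparsification_properties} shows this set has size $s=O(N)$ (the $-I$ blocks add only $O(N_{\rm sp})=O(N)$ more). The oracles therefore access only $O(N)$ stored entries.

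Finally, invoke Lemma~\ref{apth:block_encoding} with $\hat a=c_{\rm sp}$. It yields a $(c_{\rm sp}\sqrt{s_rs_c},n+3,\varepsilon)$ block encoding with one query each to $\mathcal{O}_r$ and $\mathcal{O}_c$, two queries to $\mathcal{O}_A$, and the stated gate and ancilla counts. I expect the only real obstacle to be the bookkeeping in the second step: confirming that every nonzero of $A_{\rm sp}$ has bounded magnitude and that the column sparsity count mirrors the row count. Everything else is a direct substitution.
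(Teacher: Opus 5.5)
Your proposal is correct and follows the paper's proof. The paper's proof simply applies Lemma~\ref{apth:block_encoding} to $A_{\rm sp}$, using the $O(N)$ nonzero count, the bounded row and column sparsities, and the bounded entries supplied by Theorem~\ref{thm:hss_sparsification_properties}. Your extra bookkeeping, zero-padding to dimension $2^n$ and taking $c_{\rm sp}$ large enough to dominate $f$ and $1$ as well as $\max_{ij}|A_{ij}|$, spells out details the paper leaves implicit, consistently with its convention at the end of Section~\ref{apse:1.3}.
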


\begin{proof}
	By Theorem~\ref{thm:hss_sparsification_properties}, the extended matrix $A_{\rm sp}$ stores only $O(N)$ nonzeros and has bounded row/column sparsities $s_r,s_c$, with entry magnitudes bounded by $c_{\rm sp}$. 
	Applying Lemma~\ref{apth:block_encoding} to $A_{\rm sp}$ yields an $(c_{\rm{sp}}\sqrt{s_rs_c},\,\cdot,\,\varepsilon)$-block-encoding with the stated oracle/query costs.
\end{proof}

While this sparsification method allows us to handle the dense matrix with algorithms designed for sparse systems, it also introduces three challenges. First, the solution vector of the extended system contains not only the desired solution $x$ but also the auxiliary variables $y_l$ and $z_l$. These additional components reduce the probability of measuring $x$, which increases the overall runtime of the algorithm. Second, the complexity of quantum linear system solvers is sensitive to the condition number of the matrix. The relationship between the condition number of the extended matrix $A_{\text{sp}}$ and the original matrix $A$ is not obvious, and the transformation could potentially make it much larger. Finally, the HBS representation is an approximation of the original dense matrix, and the impact of this initial error on the accuracy of the final quantum solution needs to be analyzed. Theorem \ref{thm:sparsification_regularization_error} provides quantitative bounds for each of these issues.

\begin{theorem}\label{thm:sparsification_regularization_error}
Let $A \in \mathbb{C}^{N \times N}$ be a matrix approximated by a hierarchical matrix $A_\varepsilon$ with relative error $\|A - A_\varepsilon\|/\|A\| \le \varepsilon$. Let $A_{\rm{sp}}$ be the extended sparse matrix derived from $A_\varepsilon$ with row sparsity $s_r$, column sparsity $s_c$, and entries bounded by $|(A_{\rm{sp}})_{ij}| \le c_{\rm{sp}}$. The following bounds hold: 
\begin{enumerate}[label=(\arabic*)]
    \item By applying a scaling with parameter $0 < t < 1$ in the hierarchical decomposition, the subnormalization factor of the resulting block encoding for the new sparse matrix $A'_{\rm{sp}}$, which is obtained by replacing $L_l$ with $t^{-1}L_l$ and $R_l$ with $tR_l$ in $A_{\rm{sp}}$, is bounded by $\alpha(A'_{\rm{sp}}) \le t^{-1}c_{\rm{sp}}\sqrt{s_rs_c}$. The squared norm of the auxiliary variables $(y', z')$ in the scaled solution $x' = (x^T, (y')^T, (z')^T)^T$ is constrained by the following inequality:
    \[\sum_{l=1}^\lambda (\|y'_l\|^2 + \|z'_l\|^2) \le t^2 \sum_{l=1}^\lambda (\|y_l\|^2 + \|z_l\|^2).\]
    \item For any regularization parameter $\alpha > 0$, the condition number of the Tikhonov regularized normal matrix $A_{\rm{sp}}^\dagger A_{\rm{sp}} + \alpha I$ is bounded by:
\begin{equation}
\kappa(A_{\rm{sp}}^\dagger A_{\rm{sp}} + \alpha I) \le \frac{c_{\rm{sp}}^2 s_r s_c + \alpha}{\alpha}.
\end{equation}
    \item Let $x = A^{-1}b$ and $x_\varepsilon = A_\varepsilon^{-1}b$. If the approximation is sufficiently accurate such that $\varepsilon\kappa(A) < 1$, the relative error in the solution is bounded by:
\begin{equation}
\frac{\|x - x_\varepsilon\|}{\|x\|} \le \frac{\varepsilon\kappa(A)}{1 - \varepsilon\kappa(A)}.
\end{equation}
\end{enumerate}
\end{theorem}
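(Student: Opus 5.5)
The three parts are independent, so I would prove them one at a time.

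\textbf{Part (1).} First I check that the scaled system $A'_{\rm sp}$ (with $L_l\to t^{-1}L_l$, $R_l\to tR_l$) still represents $A_\varepsilon$. Multiplying out the block rows of \eqref{eq:extended} gives $z_l=R_l(\cdot)$ and $y_l=S_l z_l$, where $S_l$ is the nested inner quantity $D_{l+1}+L_{l+1}(\cdots)R_{l+1}$. After scaling, $z'_l=tR_l(\cdot)$. The equation $y'_l=D_{l+1}z'_l+t^{-1}L_{l+1}y'_{l+1}$ comes from the block row $-y_l+D_{l+1}z_l+L_{l+1}y_{l+1}=0$. A downward induction on $l$ then gives $y'_l=t\,y_l$ and $z'_l=t\,z_l$, because the factors $t$ and $t^{-1}$ cancel in every telescoping product. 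This holds provided the scaling is applied at each level in a way that keeps the recursion consistent; I would fix the convention so that the induction closes, and check that the base level $D_{\lambda+1}$ is consistent. The first equation $D_1x+t^{-1}L_1y'_1=b$ is then unchanged, so $x$ is the same. Summing squares gives the inequality with equality, since $\|y'_l\|=t\|y_l\|$. For the subnormalization, every entry of $A'_{\rm sp}$ is bounded by $t^{-1}c_{\rm sp}$, because $t<1$ gives $|tR|\le c_{\rm sp}$ and $|t^{-1}L|\le t^{-1}c_{\rm sp}$. The sparsity pattern is unchanged, so Lemma~\ref{apth:block_encoding} (equivalently Theorem~\ref{thm:sparsification_block}) gives $\alpha\le t^{-1}c_{\rm sp}\sqrt{s_rs_c}$.

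\textbf{Part (2).} The matrix $A_{\rm sp}^\dagger A_{\rm sp}+\alpha I$ is Hermitian positive definite. Its eigenvalues lie in $[\alpha,\ \|A_{\rm sp}\|^2+\alpha]$, so its condition number is at most $(\|A_{\rm sp}\|^2+\alpha)/\alpha$. It remains to bound $\|A_{\rm sp}\|$. I would use $\|M\|^2\le\|M\|_1\|M\|_\infty$. The maximum column sum is at most $s_c c_{\rm sp}$ and the maximum row sum is at most $s_r c_{\rm sp}$, so $\|A_{\rm sp}\|^2\le c_{\rm sp}^2s_rs_c$. This is the same bound that makes the block encoding valid.

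\textbf{Part (3).} This is standard perturbation theory. Write $A_\varepsilon=A(I-A^{-1}E)$ with $E=A-A_\varepsilon$. Then $\|A^{-1}E\|\le\|A^{-1}\|\varepsilon\|A\|=\varepsilon\kappa(A)<1$, so $A_\varepsilon$ is invertible by a Neumann series and $\|(I-A^{-1}E)^{-1}\|\le 1/(1-\varepsilon\kappa)$. Next, $x-x_\varepsilon=A^{-1}b-A_\varepsilon^{-1}b=-A_\varepsilon^{-1}E\,x$, which follows from $A_\varepsilon^{-1}-A^{-1}=A_\varepsilon^{-1}(A-A_\varepsilon)A^{-1}$. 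Substituting $A_\varepsilon^{-1}=(I-A^{-1}E)^{-1}A^{-1}$ gives $\|x-x_\varepsilon\|\le \frac{\|A^{-1}\|\|E\|}{1-\varepsilon\kappa}\|x\|\le\frac{\varepsilon\kappa}{1-\varepsilon\kappa}\|x\|$.

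The main obstacle is part (1). The work there is bookkeeping: verifying that the induction through the nested blocks of \eqref{eq:extended} produces exactly the factor $t$ on every auxiliary variable, with no stray powers $t^{\pm l}$. If the naive per-level scaling did produce such powers, I would instead rescale level-$l$ variables by $t^l$ (replacing $L_l,R_l$ accordingly) and then use $t^{2l}\le t^2$. The stated inequality, which is an upper bound rather than an equality, tolerates this.
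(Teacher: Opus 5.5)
\textbf{Part (1) contains a false step.} Your induction does not close: under the scaling in the statement, the auxiliary variables pick up level-dependent powers of $t$, not a uniform factor $t$. The block rows $R_l z_{l-1}-z_l=0$ (with $z_0=x$) become $z'_l=tR_l z'_{l-1}$, so the factors compound. For example, $z'_2=tR_2(tz_1)=t^2z_2$, and in general $z'_l=t^lz_l$. Your hypothesis also fails in the $y$-rows: $z'_l=tz_l$ and $y'_{l+1}=ty_{l+1}$ would give $y'_l=tD_{l+1}z_l+L_{l+1}y_{l+1}\neq ty_l$.

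The consistent solution is the one the paper states, $y'_l=t^ly_l$ and $z'_l=t^lz_l$. Start from $y'_\lambda=D_{\lambda+1}z'_\lambda=t^\lambda y_\lambda$. Then the row $y'_l=D_{l+1}z'_l+t^{-1}L_{l+1}y'_{l+1}=t^l(D_{l+1}z_l+L_{l+1}y_{l+1})=t^ly_l$ closes, because the single $t^{-1}$ lowers $t^{l+1}$ to $t^l$. Only in the first row does the factor cancel completely, giving $D_1x+t^{-1}L_1(ty_1)=b$, so $x$ is unchanged. Hence the inequality is not an equality once $\lambda\ge 2$ (unless the deeper auxiliary variables vanish). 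It follows from $t^{2l}\le t^2$, which is your fallback and exactly the paper's argument.

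Read that fallback as ``keep the stated scaling and accept the powers $t^l$,'' not as ``change $L_l,R_l$.'' The uniform per-level factors $t^{\pm1}$ already produce $t^l$. Any larger per-level exponent on $L_l$ would break the entry bound $t^{-1}c_{\rm sp}$ on which your subnormalization estimate relies.

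Parts (2) and (3) are correct. Part (3) is the standard Neumann-series argument behind the textbook result that the paper simply cites.

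Part (2) takes a different and sounder route than the paper. The paper applies Gershgorin to $A_{\rm sp}$ and then treats the resulting eigenvalue bound $c_{\rm sp}s_r$ as a bound on $\|A_{\rm sp}\|_2$. That step is not justified for the non-normal matrix $A_{\rm sp}$, and in any case it yields $c_{\rm sp}^2s_r^2$ rather than the stated $c_{\rm sp}^2s_rs_c$. Your estimate $\|A_{\rm sp}\|_2^2\le\|A_{\rm sp}\|_1\|A_{\rm sp}\|_\infty\le c_{\rm sp}^2s_rs_c$ is rigorous and gives precisely the bound in the statement. It is also the same quantity that makes $c_{\rm sp}\sqrt{s_rs_c}$ a legitimate subnormalization in the first place.
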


Theorem \ref{thm:sparsification_regularization_error} shows that the costs associated with this sparsification method are controllable. The condition number can be bounded with regularization, and the approximation error is tied to the properties of the original system. These benefits come at a quantifiable cost in the success probability of the measurement.

We now justify the three claims in Theorem~\ref{thm:sparsification_regularization_error}. 
For readability, the argument is organized into three corresponding parts in Section~\ref{apse:2.1}.

\subsection{Analysis for the Sparsification Based Quantum Algorithm}\label{apse:2.1}

The extended sparsification approach transforms the dense matrix into a larger but sparse linear system. While this structure is amenable to known quantum solvers, the transformation itself introduces practical issues. Here, we analyze three main aspects: the impact of auxiliary variables on the measurement success probability, the condition number of the new sparse matrix, and the error resulting from the initial hierarchical approximation.

\subsubsection{Suppressing Auxiliary Variable Norms}

Solving the extended linear system gives the solution vector $x'=(x^T, y_1^T, z_1^T, \ldots, y_\lambda^T, z_\lambda^T)^T$. When this vector is represented as a quantum state, the desired solution $x$ is only one component of the full state $x'$. This means that a measurement on the final state $x'$ yields the part corresponding to $x$ with a success probability of only $\frac{\Vert x\Vert_2^2}{\Vert x'\Vert_2^2}$. The norms of the auxiliary variables $y_l$ and $z_l$ decrease this success probability, which in turn increases the number of repetitions needed and thus the overall runtime. We propose two methods to address this problem.

The first method, which is stated in Theorem \ref{thm:sparsification_regularization_error} is to rescale the linear system. We can modify the HBS factorization by introducing a scaling parameter $t$:
\begin{widetext}
	\begin{equation}\label{eq:reformulate}
		A\approx D_1+\left(\frac{1}{t}L_1\right)\left[D_2+\left(\frac{1}{t}L_2\right)\left(\cdots D_\lambda+\left(\frac{1}{t}L_\lambda\right) D_{\lambda+1} (tR_\lambda)\cdots\right)(tR_2)\right](tR_1).
	\end{equation}
\end{widetext}
This is an exact reformulation and does not introduce additional approximation error. It modifies the extended linear system so that the new auxiliary variables become $y'_l = t^l y_l$ and $z'_l = t^l z_l$. By choosing a parameter $0< t < 1$, the norms of the auxiliary components are suppressed exponentially with the level $l$. This increases the success probability of measuring the state for $x$. Using this modification, we have 
\begin{widetext}
	\begin{equation}
		p_{\text{succ}}=\frac{\Vert x\Vert ^2_2}{\Vert x'\Vert_2^2}=\frac{\Vert x\Vert^2}{\Vert x\Vert^2 +\sum_l^\lambda (\Vert y'_l\Vert^2+\Vert z'_l\Vert^2)}\geq \frac{\Vert x\Vert ^2}{\Vert x\Vert^2 +t^2\sum_l^\lambda (\Vert y_l\Vert^2+\Vert z_l\Vert^2)}.
	\end{equation}
\end{widetext}
As $t$ is reduced, the failure probability decreases.  This establishes item (1) in Theorem \ref{thm:sparsification_regularization_error}.

The second method is to apply a post-processing operator that removes the auxiliary components from the solution vector $x'$. This is possible because the relationships between $x$, $y_l$, and $z_l$ are defined by Equation \eqref{eq:extended}. We can construct an invertible matrix $A'$ that, when applied to $x'$, produces a vector containing only the $x$ component:
\begin{equation}
	A'\cdot x'=\begin{pmatrix}
		I&&&&&&\\
		R_1&&-I&&&&\\
		&-I&D_2&L_2&&\\
		&&R_2&\ddots&\ddots&&\\
		&&&\ddots&D_\lambda&L_\lambda&\\
		&&&&R_\lambda&&-I\\
		&&&&&-I&D_{\lambda+1}\\
	\end{pmatrix}
	\begin{pmatrix}
		x\\
		y_1\\
		z_1\\
		\vdots\\
		\vdots\\
		y_\lambda\\
		z_\lambda\\
	\end{pmatrix}=
	\begin{pmatrix}
		x\\
		0\\
		0\\
		\vdots\\
		\vdots\\
		0\\
		0\\
	\end{pmatrix}.
\end{equation}
This matrix $A'$ is block sparse with a structure similar to $A_{\text{sp}}$, and therefore it can be block-encoded using Theorem \ref{thm:sparsification_block}. This approach involves first preparing the state $\ket{x'}$ and then applying the block encoding of $A'$ to transform it into the state corresponding to $x$. The cost of this method comes from constructing and applying this second block encoding, which depends on its subnormalization factor $\alpha(A')\approx\alpha(A)$.

\subsubsection{Control of Condition Number}

As noted, a potential issue with the extended sparsification is that the matrix $A_{\text{sp}}$ may be ill-conditioned. We discuss two methods to address this.

One approach is to use a preconditioner. If a sparse preconditioner $M$ can be found, we can instead solve the equivalent system:
\[MAx=Mb.\]
A well-chosen preconditioner ensures that the condition number of $MA$ is smaller than that of $A_{\text{sp}}$. For example, incomplete LU factorization (ILU) with thresholding constructs a sparse approximate LU decomposition of a matrix. Its parameters can be tuned to balance the reduction in the condition number against the element number of $M$. Further information can be found in \cite{doi:10.1137/1.9781421407944,wathenPreconditioning2015}.

An alternative approach is to use Tikhonov regularization to directly control the condition number, as shown in Theorem \ref{thm:sparsification_regularization_error}. Before introducing this method, we firstly state Gershgorin circle theorem \cite{doi:10.1137/1.9781421407944}, which provides a useful bound on the spectral radius of matrix:
\begin{lemma}\label{co:Gerschgorin}
	For any $A\in\mathbb{C}^{n\times n}$, its spectrum is bounded by:
	\[{\rm spectrum}(A)\subseteq\{z\in\mathbb{C}:\vert z\vert\leq\max\limits_i\sum\limits_{j=1}^n\vert a_{ij}\vert\}.\]
\end{lemma}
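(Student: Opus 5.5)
The statement to prove is the Gershgorin-type bound of Lemma~\ref{co:Gerschgorin}: every eigenvalue of $A$ lies in the disc of radius $\max_i\sum_j|a_{ij}|$, the maximum absolute row sum $\|A\|_\infty$. The plan is to argue directly from an eigenvector and avoid invoking the full Gershgorin disc theorem.

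\textbf{Step 1 (eigenvector setup).} Let $z$ be an eigenvalue of $A$, and let $v\neq 0$ be an eigenvector, so $Av=zv$. Choose an index $i$ with $|v_i|=\max_k|v_k|$; then $|v_i|>0$.

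\textbf{Step 2 (row estimate).} Read off the $i$-th component of $Av=zv$ and use the triangle inequality:
\[
|z|\,|v_i| \;=\; \Bigl|\sum_{j=1}^n a_{ij}v_j\Bigr| \;\le\; \sum_{j=1}^n |a_{ij}|\,|v_j| \;\le\; |v_i|\sum_{j=1}^n|a_{ij}|.
\]
Dividing by $|v_i|>0$ gives $|z|\le \sum_j|a_{ij}|\le \max_i\sum_j|a_{ij}|$. Since $z$ was an arbitrary eigenvalue, the inclusion claimed for $\mathrm{spectrum}(A)$ follows.

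\textbf{Alternative route.} One could instead note that the spectral radius is bounded by any induced norm, in particular by $\|A\|_\infty$, which equals the maximum absolute row sum. This is the same computation, so I would present the direct argument above. An equivalent variant is the classical Gershgorin argument, which gives $|z-a_{ii}|\le\sum_{j\ne i}|a_{ij}|$; adding $|a_{ii}|$ to both sides yields the stated bound.

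There is no real obstacle. The only point needing care is choosing $i$ as an index where $|v_i|$ is maximal, which guarantees $|v_i|\neq 0$ and allows the bound $|v_j|\le|v_i|$.

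For its later use in part (2) of Theorem~\ref{thm:sparsification_regularization_error}, I would anticipate the following consequence. Apply the lemma to the Hermitian positive semidefinite matrix $A_{\rm sp}^\dagger A_{\rm sp}$. Each entry of this matrix is a sum of at most $s_r$ products of entries bounded by $c_{\rm sp}$, so each entry is bounded by $s_r c_{\rm sp}^2$. Each row has at most $s_c s_r$ nonzeros in one counting, but this gives a crude bound. A sharper route is $\|A_{\rm sp}\|^2\le\|A_{\rm sp}\|_1\|A_{\rm sp}\|_\infty\le (c_{\rm sp}s_c)(c_{\rm sp}s_r)$, which yields $\lambda_{\max}\le c_{\rm sp}^2s_rs_c$. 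Since $\lambda_{\min}(A_{\rm sp}^\dagger A_{\rm sp}+\alpha I)\ge\alpha$, this gives the stated condition number bound.
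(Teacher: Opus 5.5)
Your proof is correct and is the standard argument: the paper gives no proof of its own and simply quotes this bound as the Gershgorin circle theorem (citing Golub--Van Loan), and the classical proof of that theorem is exactly your maximal-component eigenvector estimate, equivalently $\rho(A)\le\|A\|_\infty$. Your closing remark is also well taken, since this lemma bounds only eigenvalues: it cannot by itself justify the paper's later step $\|A_{\rm sp}\|_2\le c_{\rm sp}s_r$ for the non-normal $A_{\rm sp}$, whereas your estimate $\|A_{\rm sp}\|_2^2\le\|A_{\rm sp}\|_1\|A_{\rm sp}\|_\infty\le c_{\rm sp}^2s_rs_c$ is valid and gives the bound actually stated in Theorem~\ref{thm:sparsification_regularization_error} (one small slip in your aside: an entry of $A_{\rm sp}^\dagger A_{\rm sp}$ is a sum of at most $s_c$ products, not $s_r$).
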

We can apply this result directly to $A_{\text{sp}}$. From Theorem \ref{thm:hss_sparsification_properties}, the entries of $A_{\text{sp}}$ can be bounded by some constant $c_{\rm sp}>0$. Since each row of $A_{\text{sp}}$ has a constant sparsity $s_r$, Lemma \ref{co:Gerschgorin} implies that any eigenvalue $z \in {\rm spectrum}(A_{\text{sp}})$ satisfies $\vert z\vert\leq c_{\rm sp}s_r$. Therefore, the largest eigenvalue of $A_{\text{sp}}$ is bounded.

However, Lemma \ref{co:Gerschgorin} does not provide a lower bound on the smallest eigenvalue. To manage this problem, we employ Tikhonov regularization \cite{englRegularizationInverseProblems1996}. Instead of solving $A_{\text{sp}}x=b$ directly, we solve a modified system. For a small parameter $\alpha > 0$, we solve $(A^\dagger_{\text{sp}} A_{\text{sp}}+\alpha I)x'=A^\dagger_{\text{sp}} b$, yielding the regularized solution $x'=(A_{\text{sp}}^\dagger A_{\text{sp}}+\alpha I)^{-1}A_{\text{sp}}^\dagger b$. Let $R_\alpha=(A^\dagger_{\text{sp}} A_{\text{sp}}+\alpha I)^{-1}A^\dagger_{\text{sp}} $. The following theorem guarantees that this regularized solution converges to the true solution.
\begin{theorem}[\cite{englRegularizationInverseProblems1996}]\label{apth:Tikhonov}
	Let $A:X\to Y$ be an injective compact linear operator with dense range in $Y$. Let $f\in A(X)$ and $f^\delta\in Y$ satisfy
	\[\Vert f^\delta -f\Vert\leq \delta <\Vert f^\delta \Vert\]
	with $\delta>0$. Then there exists a unique parameter $\alpha=\alpha(\delta)$ such that
	\[\Vert AR_{\alpha(\delta)}f^\delta-f^\delta\Vert =\delta\]
	is satisfied and
	\[R_{\alpha(\delta)}f^\delta\to A^{-1}f,\quad \delta\to 0.\]
\end{theorem}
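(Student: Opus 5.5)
The statement to prove is Theorem~\ref{apth:Tikhonov}, the discrepancy-principle result for Tikhonov regularization. The plan is to use the singular system of the compact injective operator $A$. Write $A\varphi_n=\mu_n g_n$ and $A^*g_n=\mu_n\varphi_n$, with $\mu_n>0$. Because $A$ has dense range, $\{g_n\}$ is complete in $Y$. In this basis the regularized inverse acts diagonally:
\[R_\alpha f^\delta=\sum_n \frac{\mu_n}{\alpha+\mu_n^2}(f^\delta,g_n)\varphi_n,\qquad AR_\alpha f^\delta-f^\delta=-\sum_n\frac{\alpha}{\alpha+\mu_n^2}(f^\delta,g_n)g_n.\]

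\textbf{Existence and uniqueness of $\alpha(\delta)$.} I would study the discrepancy function
\[F(\alpha)=\|AR_\alpha f^\delta-f^\delta\|^2=\sum_n\Big(\frac{\alpha}{\alpha+\mu_n^2}\Big)^2|(f^\delta,g_n)|^2 .\]
By dominated convergence, $F$ is continuous on $(0,\infty)$. Each term is strictly increasing in $\alpha$ whenever $(f^\delta,g_n)\neq0$, and since $f^\delta\neq0$ and $\{g_n\}$ is complete, some such term exists. Hence $F$ is strictly increasing. As $\alpha\to\infty$, dominated convergence again gives $F(\alpha)\to\sum_n|(f^\delta,g_n)|^2=\|f^\delta\|^2>\delta^2$; this is where completeness, i.e.\ dense range, is used. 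As $\alpha\to0$, $F(\alpha)\to0$. The intermediate value theorem together with strict monotonicity then yields a unique $\alpha(\delta)$ with $F(\alpha(\delta))=\delta^2$.

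\textbf{Convergence.} Let $x=A^{-1}f$ and $x^\delta=R_{\alpha(\delta)}f^\delta$. I would use the variational characterization: $x^\delta$ minimizes the Tikhonov functional $\|A\varphi-f^\delta\|^2+\alpha\|\varphi\|^2$. Comparing its value at $x^\delta$ with its value at $x$ gives
\[\delta^2+\alpha\|x^\delta\|^2\le\|Ax-f^\delta\|^2+\alpha\|x\|^2\le\delta^2+\alpha\|x\|^2 .\]
Therefore $\|x^\delta\|\le\|x\|$, and the family is bounded. The triangle inequality also gives
\[\|Ax^\delta-f\|\le\|Ax^\delta-f^\delta\|+\|f^\delta-f\|\le2\delta .\]
For any $z\in X$,
\[(x^\delta-x,A^*z)=(Ax^\delta-f,z)\to0,\]
and $A^*(Y)$ is dense in $X$ by injectivity of $A$. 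Combined with boundedness, this gives $x^\delta\rightharpoonup x$ weakly. Finally,
\[\|x^\delta-x\|^2=\|x^\delta\|^2-2\operatorname{Re}(x^\delta,x)+\|x\|^2\le2\|x\|^2-2\operatorname{Re}(x^\delta,x)\to0,\]
which upgrades weak convergence to norm convergence.

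\textbf{Main obstacle.} The delicate point is the limit $\alpha\to\infty$ of $F$, together with the requirement that $F$ be strictly (not merely weakly) monotone. Both rest on $\{g_n\}$ spanning $Y$, and I would check this carefully. The inequality $\|x^\delta\|\le\|x\|$ is the key trick for convergence, since it avoids needing any rate for $\alpha(\delta)$.
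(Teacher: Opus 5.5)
Your argument is correct. It is the standard textbook proof of this discrepancy-principle result, and the paper gives no proof of its own, only the citation; your steps are:
\begin{itemize}
\item strict monotonicity and continuity of the discrepancy function via the singular system;
\item the bound $\|x^\delta\|\le\|A^{-1}f\|$ from minimality of the Tikhonov functional;
\item weak convergence from boundedness and density of $A^*(Y)$;
\item the upgrade to norm convergence.
\end{itemize}
Two small fixes. First, the test vector in the weak-convergence step should be $z\in Y$, not $z\in X$. Second, dense range is what forces $F(\alpha)\to 0$ as $\alpha\to 0$: without it, $F(\alpha)$ tends to $\|Pf^\delta\|^2$, where $P$ is the orthogonal projection onto $\ker A^*$. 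By contrast, $F(\alpha)\to\|f^\delta\|^2$ as $\alpha\to\infty$ holds regardless, so your ``main obstacle'' paragraph puts the role of completeness at the wrong end.
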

Theorem \ref{apth:Tikhonov} shows that the regularized solution converges to the true solution as the error $\delta$ in the right-hand side goes to zero. Therefore with an appropriate choice of $\alpha$, we can obtain a highly accurate solution.

We now analyze the condition number of the regularized matrix $(A^\dagger_{\text{sp}} A_{\text{sp}}+\alpha I)$. From Lemma \ref{co:Gerschgorin}, we know that $\Vert A_{\text{sp}}\Vert_2\leq c_{\text{sp}}s_r$, so 
\[\Vert A_{\text{sp}}^\dagger A_{\text{sp}}+\alpha I\Vert_2\leq\Vert A^\dagger_{\text{sp}} \Vert_2\Vert A_{\text{sp}}\Vert_2+\alpha\Vert I\Vert_2\leq c_{\text{sp}}^2s_r^2+\alpha.\]
For the lower bound, for any unit vector $x$, we have $x^\dagger (A_{\text{sp}}^\dagger A_{\text{sp}}+\alpha I)x\geq \alpha $, which implies that the minimum eigenvalue of $(A^\dagger_{\text{sp}} A_{\text{sp}}+\alpha I)$ is bounded below by $\alpha$. Consequently, the condition number is bounded by: 

\[\kappa(A_{\text{sp}}^\dagger A_{\text{sp}}+\alpha I)\leq\frac{c_{\text{sp}}^2s_r^2+\alpha}{\alpha}.\]
Since $(A_{\text{sp}}^\dagger A_{\text{sp}}+\alpha I)$ is the matrix to be inverted when solving the system of equations $(A_{\text{sp}}^\dagger A_{\text{sp}}+\alpha I)x'=A_{\text{sp}}^\dagger b$ in the Tikhonov regularization step, this bound on its condition number is sufficient for solving the system. This proves item (2) in Theorem \ref{thm:sparsification_regularization_error}.

\subsubsection{Error from Hierarchical Matrix Approximation}
Since the extended system is built from an HBS approximation of the original matrix $A$, we also analyze the error introduced by this approximation. The error in the final solution can be bounded using a standard result from numerical linear algebra \cite{doi:10.1137/1.9781421407944}.

\begin{theorem}\label{apth:error}
	Let $A$ be the original matrix and $A_\varepsilon$ its compressed representation. Assume that ${\Vert A-A_\varepsilon\Vert}/{\Vert A\Vert}\leq \varepsilon$ and the condition $\varepsilon\kappa(A)<1$ holds, which can be ensured by the compression algorithm. Let $x$ and $b$ be vectors such that $Ax=b$. Define $b_\varepsilon=A_\varepsilon x$ and $x_\varepsilon=A_\varepsilon^{-1}b$. Then the following error bounds hold:
	\[\frac{\Vert b-b_\varepsilon\Vert}{\Vert b\Vert}\leq\varepsilon\kappa(A),\]
	\[\frac{\Vert x-x_\varepsilon\Vert}{\Vert x\Vert}\leq\frac{\varepsilon\kappa(A)}{1-\varepsilon\kappa(A)},\]
	where $\kappa(A)$ is the condition number of $A$.
\end{theorem}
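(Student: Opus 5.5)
The plan is the classical perturbation argument, applied to the two quantities separately. Set $E = A_\varepsilon - A$, so that by hypothesis $\|E\| \le \varepsilon \|A\|$.

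\textbf{Residual bound.} Since $b = Ax$ and $b_\varepsilon = A_\varepsilon x$, we have $b - b_\varepsilon = -Ex$. Hence
\[
\|b - b_\varepsilon\| \le \varepsilon\|A\|\,\|x\|.
\]
To pass to a relative bound, write $x = A^{-1}b$, which gives $\|x\| \le \|A^{-1}\|\,\|b\|$. Substituting,
\[
\frac{\|b-b_\varepsilon\|}{\|b\|} \le \varepsilon\|A\|\,\|A^{-1}\| = \varepsilon\kappa(A).
\]
This is the first inequality.

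\textbf{Invertibility of $A_\varepsilon$.} Next I show that $A_\varepsilon$ is invertible, so that $x_\varepsilon$ is well defined. Write
\[
A_\varepsilon = A\,(I + A^{-1}E).
\]
We have $\|A^{-1}E\| \le \|A^{-1}\|\,\varepsilon\|A\| = \varepsilon\kappa(A) < 1$. By the Neumann series, $I + A^{-1}E$ is invertible, with
\[
\|(I + A^{-1}E)^{-1}\| \le \frac{1}{1 - \varepsilon\kappa(A)}.
\]

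\textbf{Solution bound.} Both solutions use the same right-hand side: $A_\varepsilon x_\varepsilon = b = Ax$. Subtracting $A_\varepsilon x$ from both sides gives
\[
A_\varepsilon(x_\varepsilon - x) = (A - A_\varepsilon)x = -Ex,
\]
so $x_\varepsilon - x = -A_\varepsilon^{-1}Ex$. Using the factorization above,
\[
A_\varepsilon^{-1}E = (I + A^{-1}E)^{-1}A^{-1}E.
\]
Therefore
\[
\|x - x_\varepsilon\| \le \frac{\|A^{-1}E\|}{1 - \varepsilon\kappa(A)}\,\|x\| \le \frac{\varepsilon\kappa(A)}{1 - \varepsilon\kappa(A)}\,\|x\|,
\]
which is the second inequality.

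\textbf{Main obstacle.} The only delicate step is the invertibility of $A_\varepsilon$, together with the bound on $\|A_\varepsilon^{-1}\|$. The hypothesis $\varepsilon\kappa(A) < 1$ is exactly what the Neumann-series argument needs, so I expect no further difficulty. I would also note that the statement is sharp up to first order in $\varepsilon\kappa(A)$.
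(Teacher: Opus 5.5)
Your proof is correct and complete: it is the standard perturbation argument (write $A_\varepsilon = A(I+A^{-1}E)$ with $\|A^{-1}E\|\le\varepsilon\kappa(A)<1$, then use the Neumann series). The paper gives no proof of its own and cites this classical result from Golub--Van Loan; the only implicit assumption in your write-up is $b\neq 0$ (equivalently $x\neq 0$), which the statement already needs for its relative quantities to be defined.
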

Theorem \ref{apth:error} shows that if the matrix $A$ is well-conditioned, the relative error in the solution is of the order $O(\varepsilon)$. In practice, the approximation accuracy $\varepsilon$ is chosen to be sufficiently small to ensure that the condition $\varepsilon\kappa(A) < 1$ holds. Item (3) in Theorem \ref{thm:sparsification_regularization_error} is exactly Theorem~\ref{apth:error}.

\section{A Direct Block Encoding Algorithm for Hierarchical Matrices}\label{se:direct_be}

An alternative to the extended sparsification method is to construct a block encoding of the matrix directly from its HBS approximation. Instead of forming an extended sparse system to solve a linear equation, this approach embeds the matrix itself into a unitary operator, as defined in Definition \ref{def:block_encoding}. This method requires oracles that access only $O(N)$ nonzero elements, in contrast to the $O(N^2)$ elements required by the approach in \cite{nguyenBlockencodingDenseFullrank2022}, and it works for a broader class of kernel functions. The construction follows the hierarchical structure of the matrix shown in Equation \eqref{eq:recursive}, proceeding from the lowest level upwards to build the encoding for the full matrix.

The construction relies on two main quantum primitives: the linear combination of unitaries (LCU) and quantum matrix multiplication (QMM). See Figures   \ref{fig:lcu_circuit} and \ref{fig:qmm_circuit} for their quantum circuits, respectively. We assume that block encodings for the component matrices $D_l$, $L_l$, and $R_l$ are given, denoted as $U^D_l$, $U^L_l$, and $U_l^R$, respectively. Each encoding uses $a$ ancilla qubits and has an error of $\varepsilon$. The subnormalization factors for the diagonal blocks $D_l$ are denoted $\alpha_l^{(D)}$, and for the interpolation matrices $L_l, R_l$ they are $\alpha_l^{(L)}$. Based on Lemma \ref{apth:block_encoding} for block encoding sparse matrices, and recalling that the elements of $D_l$ are drawn from $A$, we have
\begin{equation}
\begin{aligned}
    &\alpha_l^{(D)} \le (6^d-3^d)n_lc_{\text{sp}},\ l>1,\\ &\alpha_1^{(D)} \le 3^dn_1c_{\text{sp}}.
\end{aligned}
\end{equation}
For the interpolation matrices, the subnormalization factor is bounded by 
\begin{equation}
    \alpha_l^{(R)}=\alpha_l^{(L)} \le f\sqrt{k_l(n_l-k_l+1)}.
\end{equation}
Since $n_l$ and $k_l$ are bounded by constants, these subnormalization factors are also bounded. To combine these block encoded matrices using LCU, we first need the concept of a state preparation pair.

\begin{definition}[State preparation pair]
    Let $y\in\mathbb{C}^m$ with $\Vert y\Vert_1\leq\beta$. A pair of unitaries $(P_L,P_R)$ acting on $n$ qubits is called a $(\beta,n,\gamma)$-state-preparation-pair for $y$ if $P_L\ket{0}^{\otimes n}=\sum_{j=0}^{2^n-1}c_j\ket{j}$ and $P_R\ket{0}^{\otimes n}=\sum_{j=0}^{2^n-1}d_j\ket{j}$ such that $\sum_{j=0}^{m-1}\vert \beta c_j^*d_j-y_j\vert\leq \gamma$ and $c_j^*d_j=0$ for any $j\in\{m,\ldots,2^n-1\}$.
\end{definition}

The LCU technique allows for the construction of a linear combination of unitary operators, which in turn implements the addition of the matrices they block encode.

\begin{lemma}[LCU \cite{gilyenQuantumSingularValue2019}]\label{apth:LCU}
    Let $A=\sum_{j=0}^{m-1}y_jA_j$ be an $s$-qubit operator where $\Vert y\Vert_1\leq\beta$. Suppose $(P_L, P_R)$ is a $(\beta,n,\gamma)$-state-preparation-pair for $y$, and $W=\sum_{j=0}^{m-1}\ket{j}\bra{j}\otimes U_j+(I-\sum_{j=0}^{m-1}\ket{j}\bra{j})\otimes I_a\otimes I_s$ is an $(n+a+s)$-qubit unitary such that for all $j\in\{0,\ldots,m-1\}$, $U_j$ is an $(\alpha,a,\varepsilon)$-block-encoding of $A_j$. Then the unitary $\widetilde{W}=(P_L^\dagger\otimes I_a\otimes I_s)W(P_R\otimes I_a\otimes I_s)$ is an $(\alpha\beta,a+n,\alpha\gamma+\beta\varepsilon)$-block-encoding of $A$.
\end{lemma}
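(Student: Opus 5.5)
We prove Lemma~\ref{apth:LCU} by computing the top-left block of $\widetilde{W}$ on the ancilla register and comparing it with $A/(\alpha\beta)$. Write the ancilla as $n$ LCU qubits followed by $a$ block-encoding qubits, and let $\Pi=\bra{0}^{\otimes n}\bra{0}^{\otimes a}\otimes I_s$. Since $P_R$ acts only on the LCU register, $(P_R\otimes I)\ket{0}^{\otimes n}\ket{0}^{\otimes a}=\sum_j d_j\ket{j}\ket{0}^{\otimes a}$. Similarly, $\bra{0}^{\otimes n}P_L^\dagger=\sum_j c_j^*\bra{j}$. The controlled unitary $W$ is block diagonal in the LCU register. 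Hence
\[
\Pi\,\widetilde{W}\,\Pi^\dagger=\sum_{j=0}^{m-1}c_j^*d_j\,(\bra{0}^{\otimes a}\otimes I_s)U_j(\ket{0}^{\otimes a}\otimes I_s)+\sum_{j\ge m}c_j^*d_j\,I_s .
\]
The second sum vanishes because the state-preparation pair satisfies $c_j^*d_j=0$ for $j\ge m$. This orthogonality of the control register is the one structural point to check carefully. The rest is bookkeeping with the triangle inequality.

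Set $B_j:=\alpha(\bra{0}^{\otimes a}\otimes I_s)U_j(\ket{0}^{\otimes a}\otimes I_s)$. By hypothesis $\|A_j-B_j\|\le\varepsilon$, and since $U_j$ is unitary, $\|B_j\|\le\alpha$. Then
\[
\Bigl\|A-\alpha\beta\,\Pi\widetilde{W}\Pi^\dagger\Bigr\|=\Bigl\|\sum_j y_jA_j-\sum_j\beta c_j^*d_jB_j\Bigr\|\le\sum_j|y_j|\,\|A_j-B_j\|+\sum_j|y_j-\beta c_j^*d_j|\,\|B_j\| .
\]
This uses the splitting $y_jA_j-\beta c_j^*d_jB_j=y_j(A_j-B_j)+(y_j-\beta c_j^*d_j)B_j$. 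The first sum is at most $\|y\|_1\varepsilon\le\beta\varepsilon$. The second is at most $\alpha\gamma$ by the definition of a $(\beta,n,\gamma)$-state-preparation-pair. The total error is therefore $\alpha\gamma+\beta\varepsilon$.

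It remains to confirm the parameters. $\widetilde{W}$ is a product of unitaries, hence unitary. It uses $a+n$ ancilla qubits, and its subnormalization is $\alpha\beta$, matching the claimed $(\alpha\beta,a+n,\alpha\gamma+\beta\varepsilon)$-block-encoding.

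The only subtle step is the ordering of tensor factors. The projection must be taken with $\ket{0}^{\otimes a}$ on the inner ancilla, and $W$ must act as the identity on control indices $j\ge m$. Both follow directly from the form of $W$ given in the statement. The requirement $\alpha\beta\ge\|A\|$ from Definition~\ref{def:block_encoding} holds only up to the error $\alpha\gamma+\beta\varepsilon$. As in the paper's usage, it is understood in that approximate sense.
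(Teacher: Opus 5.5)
Your proof is correct and is essentially the standard argument from the cited source (the paper states Lemma~\ref{apth:LCU} without proof): compress $\widetilde{W}$ onto the all-zero ancilla, drop the $j\ge m$ terms via $c_j^*d_j=0$, and bound the difference with the splitting $y_jA_j-\beta c_j^*d_jB_j=y_j(A_j-B_j)+(y_j-\beta c_j^*d_j)B_j$ together with $\|B_j\|\le\alpha$. One small correction to your last paragraph: the condition $\alpha\beta\ge\|A\|$ of Definition~\ref{def:block_encoding} holds exactly, not just approximately, because the hypothesis that each $U_j$ is an $(\alpha,a,\varepsilon)$-block-encoding already includes $\alpha\ge\|A_j\|$, so $\|A\|\le\sum_j|y_j|\,\|A_j\|\le\|y\|_1\alpha\le\alpha\beta$.
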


\begin{figure}[htbp]
\centering
\[
\Qcircuit @C=1.2em @R=1.0em {
    \lstick{\ket{0}^{\otimes n}} & \gate{P_R} & \ctrl{1} & \gate{P_L^\dagger} & \qw \\
    \lstick{\ket{0}^{\otimes a}} & \qw        & \multigate{1}{W} & \qw & \qw \\
    \lstick{\ket{\psi}_s}        & \qw        & \ghost{W} & \qw & \qw
}
\]
\caption{Quantum circuit for LCU.}\label{fig:lcu_circuit}
\end{figure}
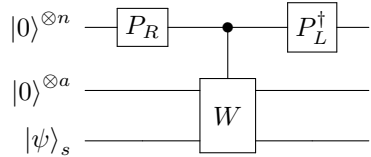

Lemma \ref{apth:LCU} assumes a uniform subnormalization factor $\alpha$ for all operators. This result can be extended to the more common case where we have a set of block encodings $\{U_j\}$, each being an $(\alpha_j, a, \varepsilon_j)$-encoding for its respective matrix $A_j$. We can choose a reference subnormalization factor $\alpha = \max_j \alpha_j$ to rescale the matrices and coefficients. By defining new coefficients $y'_j = y_j \alpha_j / \alpha$, the linear combination for $A$ becomes $A=\sum_{j=0}^{m-1}y'_j(A_j{\alpha}/{\alpha_j})$. The effective subnormalization factor for the block encoding of $A$ is then the weighted sum $\sum_{j=0}^{m-1} |y_j|\alpha_j$. The combined error is bounded by $\alpha\gamma+\sum_{j=0}^{m-1}|y_j|\varepsilon_j$. Thus, the operator $\widetilde{W}$ is an $(\sum_{j=0}^{m-1}\vert y_j\vert\alpha_j, a+n, \alpha\gamma+\sum_{j=0}^{m-1}|y_j|\varepsilon_j)$-block-encoding of $A$.

Beyond combining matrices linearly, it is also essential to handle their products. The following lemma, known as QMM, provides a straightforward way to construct the block encoding of a product $AB$ by composing the block encodings of $A$ and $B$.
\begin{lemma}[QMM \cite{gilyenQuantumSingularValue2019}]\label{apth:QMM}
    If $U$ is an $(\alpha,a,\delta)$-block-encoding of an $s$-qubit operator $A$, and $V$ is an $(\beta,b,\varepsilon)$-block-encoding of an $s$-qubit operator $B$ then $(I_b\otimes U)(I_a\otimes V)$ is an $(\alpha\beta,a+b,\alpha\varepsilon+\beta\delta)$-block-encoding of $AB$. Here $I_a(I_b)$ acts on the ancilla qubits of $U(V)$.
\end{lemma}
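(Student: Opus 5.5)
The plan is a direct verification that the composed circuit has the block-encoding property of Definition~\ref{def:block_encoding}, tracking the subnormalization and the error separately. Write $\Pi_a=\bra{0}^{\otimes a}\otimes I_{2^s}$ for the projection onto the zero ancilla of $U$, and similarly $\Pi_b$ for $V$. The two ancilla registers are disjoint, and $U$ acts trivially on the $b$-register while $V$ acts trivially on the $a$-register. Hence projecting both registers onto $\ket{0}$ factorizes:
\[
(\bra{0}^{\otimes a}\bra{0}^{\otimes b}\otimes I)\,(I_b\otimes U)(I_a\otimes V)\,(\ket{0}^{\otimes a}\ket{0}^{\otimes b}\otimes I)=\tilde A\,\tilde B,
\]
where $\tilde A:=\Pi_a U\Pi_a^\dagger$ and $\tilde B:=\Pi_b V\Pi_b^\dagger$.

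The factorization is the one step that needs care. First apply $I_a\otimes V$ to $\ket{0}^{\otimes a}\ket{0}^{\otimes b}\ket{\psi}$. The output is $\ket{0}^{\otimes a}$ tensored with $V(\ket{0}^{\otimes b}\ket{\psi})$. Decompose this output as $\ket{0}^{\otimes b}\tilde B\ket{\psi}$ plus a part orthogonal to $\ket{0}^{\otimes b}$ on the $b$-register. Because $I_b\otimes U$ does not touch the $b$-register, it preserves that orthogonality, so the orthogonal part is annihilated by the final projection onto $\bra{0}^{\otimes b}$. What survives is $\Pi_a U(\ket{0}^{\otimes a}\tilde B\ket{\psi})=\tilde A\tilde B\ket{\psi}$.

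The resulting unitary therefore block-encodes $\tilde A\tilde B$ with $a+b$ ancillas. Its effective subnormalization is $\alpha\beta$, so we must bound $\|AB-\alpha\beta\tilde A\tilde B\|$. I would use the telescoping decomposition
\[
AB-\alpha\beta\tilde A\tilde B=(A-\alpha\tilde A)B+\alpha\tilde A\,(B-\beta\tilde B).
\]
The first term has norm at most $\delta\|B\|\le\delta\beta$, using $\|B\|\le\beta$; that inequality is built into Definition~\ref{def:block_encoding} through the requirement $\beta\ge\|B\|$. The second term has norm at most $\alpha\|\tilde A\|\varepsilon\le\alpha\varepsilon$, because $\tilde A$ is a compression of a unitary and so $\|\tilde A\|\le1$. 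Adding the two gives the error bound $\alpha\varepsilon+\beta\delta$.

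Finally, the definition also requires the new subnormalization to dominate the target norm. This holds because $\|AB\|\le\|A\|\|B\|\le\alpha\beta$. I expect no real obstacle beyond the factorization argument.
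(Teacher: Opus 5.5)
The paper gives no proof of its own here; it imports the lemma from Gily\'en et al., and your argument is correct and is essentially the standard proof in that reference: you factorize the doubly projected circuit as $\tilde A\tilde B$, then split $AB-\alpha\beta\tilde A\tilde B=(A-\alpha\tilde A)B+\alpha\tilde A(B-\beta\tilde B)$. You also correctly note that the $\beta\delta$ term needs $\|B\|\le\beta$, which the paper's Definition~\ref{def:block_encoding} guarantees through its requirement that the subnormalization dominate the norm.
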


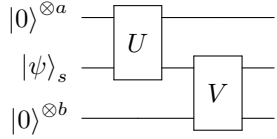
\begin{figure}[htbp]
\centering
\[
\Qcircuit @C=1.2em @R=1.0em {
    \lstick{\ket{0}^{\otimes a}} & \multigate{1}{U} &  \qw & \qw \\
    \lstick{\ket{\psi}_s} & \ghost{U}        & \multigate{1}{V} & \qw \\
    \lstick{\ket{0}^{\otimes b}}        & \qw        & \ghost{V} & \qw
}
\]
\caption{Quantum circuit for QMM.}\label{fig:qmm_circuit}
\end{figure}

We now describe the direct construction of a block encoding for $A$ from its HBS factors. To construct the block encoding at level $l$, we use the oracles $U^D_{l+1}$, $U^L_l$, and $U^R_l$ and combine them using QMM and LCU to implement the recursive formula in Equation \eqref{eq:recursive}. The full construction for $A$ involves $2\lambda$ multiplications and $\lambda+1$ additions of block-encoded matrices, requiring a total of $2a\lambda+\lambda+1=O(\log(N)a)$ ancilla qubits. The overall properties are summarized in the following theorem.
\begin{theorem}\label{thm:recursive_be_analysis}
Let $A$ be a matrix represented hierarchically to depth $\lambda$. Let $\{D_l\},\{L_l\},\{R_l\}$ be the HBS decomposition of $A$ and $U_l^D$ ($U_l^L,U_l^R$) be a $(\alpha_l^{(D)},a,\varepsilon)$ ($(\alpha_l^{(L)},a,\varepsilon),(\alpha_l^{(R)},a,\varepsilon)$) block encoding of $D_l$ ($L_l,R_l$). Assume the HBS representation is sparse, such that the total number of nonzero entries across all matrices $\{D_l\}, \{L_l\}, \{R_l\}$ is $O(N)$. Further, we assume that $\alpha_l^{(L)}=\alpha_l^{(R)}$. Then with QMM and LCU, a $(\alpha_A,O(\log{N})a,\varepsilon_A)$ block encoding of $A$ is obtained, where
\begin{equation}\label{eq:subnormal_direct}
\alpha_A=\sum\limits_{l=1}^{\lambda+1}\alpha_l^{(D)}\prod\limits_{m=1}^{l-1} (\alpha_m^{(L)})^2,
\end{equation}
and
\begin{equation}\label{eq:error_direct}
\varepsilon_A =\left[\sum\limits_{l=1}^{\lambda}\left(2\alpha^{(L)}_l\sum_{j=l}^{\lambda+1}\alpha_j^{(D)}\prod_{m=1}^{j-1}(\alpha_m^{(L)})^2+1\right)\prod\limits_{m=1}^{l-1}(\alpha_m^{(L)})^2+\prod_{m=1}^\lambda(\alpha_m^{(L)})^2\right]\varepsilon.
\end{equation}
\end{theorem}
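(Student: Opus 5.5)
The plan is to build the encoding from the innermost level outward, following the telescoping formula \eqref{eq:recursive}, and to track subnormalization and error by a recursion. Define $M_{\lambda+1}=D_{\lambda+1}$ and, for $l=\lambda,\dots,1$,
\[
M_l = D_l + L_l M_{l+1} R_l ,
\]
so that $M_1$ is the HBS approximation of $A$. Let $(\beta_l,\eta_l)$ denote the subnormalization and error of the encoding we build for $M_l$, and write $\alpha_m:=\alpha_m^{(L)}=\alpha_m^{(R)}$. The base case is $\beta_{\lambda+1}=\alpha_{\lambda+1}^{(D)}$ and $\eta_{\lambda+1}=\varepsilon$.

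At each level we use two steps.
\begin{enumerate}
\item \textbf{Product.} Apply Lemma~\ref{apth:QMM} twice to obtain an encoding of $L_lM_{l+1}R_l$. The subnormalization is $\alpha_l^2\beta_{l+1}$. The error, from two applications of $\alpha\varepsilon'+\beta\delta$, is
\[
\alpha_l^2\eta_{l+1}+2\alpha_l\beta_{l+1}\varepsilon .
\]
\item \textbf{Sum.} Combine this with $U_l^D$ using the weighted extension of Lemma~\ref{apth:LCU} stated after it, with coefficients $y=(1,1)$. We assume an exact state-preparation pair ($\gamma=0$), which is available for two terms up to the rescaling by $\alpha_j/\alpha$. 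This gives
\[
\beta_l=\alpha_l^{(D)}+\alpha_l^2\beta_{l+1},\qquad \eta_l=\varepsilon+\alpha_l^2\eta_{l+1}+2\alpha_l\beta_{l+1}\varepsilon .
\]
\end{enumerate}
Each level adds $2a$ ancillas from the products and one qubit for the two-term LCU. Over $\lambda$ levels this gives the $O(\log N)\,a$ ancilla count, since $\lambda=O(\log N)$.

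We then unroll the recursions by induction on $l$ downward. For the subnormalization, the claim is
\[
\beta_l=\sum_{j=l}^{\lambda+1}\alpha_j^{(D)}\prod_{m=l}^{j-1}\alpha_m^2 .
\]
Setting $l=1$ gives \eqref{eq:subnormal_direct}.

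For the error, unrolling gives
\[
\eta_1=\sum_{l=1}^{\lambda}\Bigl(\prod_{m=1}^{l-1}\alpha_m^2\Bigr)\bigl(1+2\alpha_l\beta_{l+1}\bigr)\varepsilon+\Bigl(\prod_{m=1}^{\lambda}\alpha_m^2\Bigr)\varepsilon ,
\]
where the final term is the propagated error of $U_{\lambda+1}^D$. Substituting the closed form of $\beta_{l+1}$ into this expression should reproduce the bracket in \eqref{eq:error_direct}.

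I expect the main obstacle to be bookkeeping: matching the index ranges and product prefactors in the unrolled $\eta_1$ exactly to the form stated in \eqref{eq:error_direct}. Where they differ, I would show that the stated expression dominates the derived one; for instance, using $\sum_{j\ge l}$ with products starting at $m=1$, and noting $\alpha_m\ge 1$ (which holds for interpolation matrices containing an identity block), the stated bracket is at least the derived one, so the theorem holds as an upper bound.

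A secondary point is justifying the LCU combination when the two terms have unequal subnormalizations. Here I would invoke the rescaled version with $y_j'=y_j\alpha_j/\alpha$ stated after Lemma~\ref{apth:LCU}, and assume the resulting state-preparation pair is implemented exactly. Finally, the sparsity assumption enters only through Lemma~\ref{apth:block_encoding}, to guarantee that the input encodings exist with the given bounded $\alpha$'s.
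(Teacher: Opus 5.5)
Your proposal is correct and follows the paper's own argument: it uses the same QMM/LCU recursion $\beta_l=\alpha_l^{(D)}+(\alpha_l^{(L)})^2\beta_{l+1}$, $\eta_l=\varepsilon+(\alpha_l^{(L)})^2\eta_{l+1}+2\alpha_l^{(L)}\beta_{l+1}\varepsilon$ with exact state preparation ($\gamma=0$), unrolled to the same intermediate expression for $\eta_1$. Your bookkeeping caveat is warranted: the paper writes the final substitution as an equality, but $\sum_{j=l}^{\lambda+1}\alpha_j^{(D)}\prod_{m=1}^{j-1}(\alpha_m^{(L)})^2$ only dominates the true $\beta_{l+1}=\sum_{j=l+1}^{\lambda+1}\alpha_j^{(D)}\prod_{m=l+1}^{j-1}(\alpha_m^{(L)})^2$, and that domination is exactly where $\alpha_m^{(L)}\ge\|L_m\|\ge 1$ is needed.
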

\begin{proof}
	We follow the telescoping reconstruction in \eqref{eq:recursive} from level $\lambda+1$ upward. 
	At each level, QMM accounts for the product $L_lA_{l+1}R_l$ and multiplies the corresponding subnormalization factors, while LCU adds the diagonal contribution $D_l$ and adds subnormalizations linearly. 
	This yields the recursion for $\alpha_l$ and the claimed closed form \eqref{eq:subnormal_direct}. 
	The precision bound \eqref{eq:error_direct} follows by repeatedly applying Lemma~\ref{apth:QMM} and Lemma~\ref{apth:LCU} and collecting the induced additive errors at each multiplication/addition step.
\end{proof}

For completeness, we illustrate the recursion and the resulting upper bounds. First, we establish a bound for the subnormalization factor. The matrix is reconstructed recursively from the innermost level, following the relation from Equation \eqref{eq:recursive}:
\begin{equation}
	A_l=L_lA_{l+1}R_l+D_l,\  l=1,\ldots,\lambda,\quad A_{\lambda+1}=D_{\lambda+1}.
\end{equation}
Let $\alpha_l$ be the subnormalization factor for the block encoding of $A_l$, and $\{\alpha^{(L)}_l,\alpha^{(D)}_l\}$ be as defined in Section \ref{se:sparsification_quantum}. The product term $L_lA_{l+1}R_l$ is implemented using QMM, which multiplies the subnormalization factors of the components. The addition of the $D_l$ term is done using LCU. This leads to the relation:
\begin{equation}
	\alpha_l=\alpha_{l+1}(\alpha_l^{(L)})^2+\alpha_l^{(D)},\ l=1,\ldots,\lambda,\quad \alpha_{\lambda+1}=\alpha_{\lambda+1}^{(D)}.
\end{equation}
Solving this relation iteratively gives the result \eqref{eq:subnormal_direct}.

A large subnormalization factor can reduce the efficiency of quantum algorithms. To address this, we can use a technique to amplify the singular values of the encoded matrix, which effectively reduces the subnormalization factor. The following lemma, as a corollary from \cite{gilyenQuantumSingularValue2019}, provides the basis for this method.

\begin{lemma}[Uniform singular value amplification \cite{gilyenQuantumSingularValue2019}]\label{apth:singular_value_amplification}
	Let $U$ be a block encoding of a matrix $A$ whose singular values $\zeta_i$ lie in the range $[0,(1-\delta)/\gamma]$ for some $\gamma>1$ and $0<\delta<1/2$. Then, for any $0<\varepsilon<1/2$, there exists an efficiently computable polynomial of degree $O(\frac{\gamma}{\delta}\log\frac{\gamma}{\varepsilon})$ that, when applied to $U$ using the QSVT, results in a new block encoding $\tilde{U}$. This $\tilde{U}$ block-encodes a matrix $\tilde{A}$ that is an $\varepsilon$-approximation of $\gamma A$. Specifically, each amplified singular value $\tilde{\zeta}_i$ of $\tilde{A}$ satisfies $\vert \tilde{\zeta}_i/(\gamma\zeta_i)-1\vert\leq \varepsilon$.
\end{lemma}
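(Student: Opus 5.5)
The plan is to realize the amplification as a single odd polynomial $p$ applied by QSVT. The polynomial should behave like $\gamma x$ on the interval containing the singular values and stay bounded by $1$ on $[-1,1]$. Writing $p(x)=\gamma x\, q(x)$ with $q$ even, the factor $x$ is exact, so the relative-error requirement $|\tilde\zeta_i/(\gamma\zeta_i)-1|\le\varepsilon$ reduces to the uniform requirement $|q(x)-1|\le\varepsilon$ on $[0,(1-\delta)/\gamma]$. This also removes any difficulty at small singular values.

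\textbf{Step 1 (smooth rectangle).} Set $t=(1-\delta/2)/\gamma$ and choose a steepness $k=\Theta\big(\tfrac{\gamma}{\delta}\sqrt{\log(\gamma/\varepsilon)}\big)$. Define
\[
R(x)=\tfrac12\big(\mathrm{erf}(k(x+t))-\mathrm{erf}(k(x-t))\big).
\]
This $R$ is even and satisfies $0\le R\le 1$. By the Gaussian tail of $\mathrm{erf}$, it is within $\varepsilon/4$ of $1$ on $|x|\le(1-\delta)/\gamma$, since that set lies at distance $\ge \delta/(2\gamma)$ inside $\pm t$. It is at most $\varepsilon/(4\gamma)$ for $|x|\ge 1/\gamma$, since those points lie at distance $\ge\delta/(2\gamma)$ outside $\pm t$.

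\textbf{Step 2 (polynomial approximation).} Use the standard Jacobi--Anger / truncated Chebyshev expansion of $\mathrm{erf}(kx)$. This gives an even polynomial $\tilde q$ of degree $O(k\sqrt{\log(\gamma/\varepsilon)})=O(\tfrac{\gamma}{\delta}\log\tfrac{\gamma}{\varepsilon})$ with $|\tilde q-R|\le \varepsilon/(4\gamma)$ on $[-1,1]$. The accuracy must be $\varepsilon/\gamma$ rather than $\varepsilon$, because the result is multiplied by $\gamma x$, which can reach $\gamma$ in magnitude. The required precision only enters logarithmically, so this costs nothing in the degree.

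\textbf{Step 3 (normalization).} Set $q=\tilde q/(1+\varepsilon/2)$ and $p(x)=\gamma x\,q(x)$, and check $|p|\le 1$ on $[-1,1]$ in two regions.
\begin{itemize}
\item For $|x|\le 1/\gamma$: $|p|\le |\tilde q|/(1+\varepsilon/2)\le 1$.
\item For $1/\gamma\le|x|\le 1$: $|p|\le \gamma\cdot \varepsilon/(2\gamma)<1$.
\end{itemize}
On $[0,(1-\delta)/\gamma]$, one still has $|q-1|\le\varepsilon$ after absorbing constants, by rescaling $\varepsilon$ by a constant factor.

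\textbf{Step 4 (QSVT).} Since $p$ is odd, real and bounded by $1$, QSVT applied to $U$ produces a block encoding $\tilde U$ of $p^{(SV)}(A)$. Its singular values are $\tilde\zeta_i=\gamma\zeta_i q(\zeta_i)$, which gives the claimed relative bound. The degree of $p$ is that of $\tilde q$ plus one.

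I expect the main obstacle to be Step 3, reconciling the global bound $|p|\le1$ with the $\varepsilon$-accuracy near $1$. The window $(1-\delta)/\gamma\le|x|\le 1/\gamma$ is where the $1/\delta$ factor in the degree is forced. The erf cutoff has to be placed at $t$ in the middle of this window so that both tail estimates hold simultaneously.
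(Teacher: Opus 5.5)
The paper gives no proof of this lemma; it imports it from Gilyén et al. Your argument is a correct reconstruction of essentially the construction used there: the odd polynomial $p(x)=\gamma x\,q(x)$, where $q$ is a normalized erf-based rectangle approximation with cutoff $t=(1-\delta/2)/\gamma$, transition width $\Theta(\delta/\gamma)$ and precision $\Theta(\varepsilon/\gamma)$, followed by QSVT for real, odd, bounded polynomials. Two details are worth stating explicitly: $\mathrm{erf}(k(x\pm t))$ must be approximated for arguments in $[-2,2]$, which costs only a constant factor in degree; and $q(\zeta_i)\ge 1-\varepsilon>0$, so the singular values really are $\gamma\zeta_i q(\zeta_i)$.
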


While Lemma \ref{apth:singular_value_amplification} can reduce the subnormalization factor by a factor of $\gamma$, this amplification is not without cost. The complexity of the new block encoding, measured in the number of queries to the original encoding $U$, scales with the amplification factor $\gamma$. This trade off must be considered. We can apply this technique selectively, for instance only to the block encodings $U^R_l$ and $U^L_l$, to limit the overhead to a constant factor. However, the $2$-norm of a matrix is a lower bound for the subnormalization factor of any of its block encodings. For the interpolation matrices used here, this norm is at least $1$. Consequently, the subnormalization factor of the composite block encoding still increases at each step of the construction.

Finally, we analyze the precision error introduced by this recursive block encoding procedure. Following Theorem \ref{apth:QMM}, we denote the error at level $l$ as $\varepsilon_l$ and compute the total error recursively. This gives the following bound on the error:
\begin{widetext}
	\begin{equation}
		\varepsilon_l=(\alpha_l^{(L)}(\alpha_{l+1}\varepsilon+\alpha_l^{(L)}\varepsilon_{l+1})+\alpha_{l+1}\alpha_l^{(L)}\varepsilon)+\varepsilon+\max((\alpha^{(L)}_l)^2\alpha_{l+1},\alpha_l^D)\gamma,\quad \varepsilon_{l+1}=\varepsilon,
	\end{equation}
\end{widetext}
where $\gamma$ represents the error from the state preparation. Due to the multiple matrix multiplications, the error depends on the subnormalization factors and can accumulate quickly. Setting $\gamma=0$, we obtain the expression:
\begin{widetext}
	\begin{equation}
		\begin{aligned}
			\varepsilon_1&=\sum\limits_{l=1}^{\lambda}(2\alpha^{(L)}_l\alpha_{l+1}+1)\prod\limits_{m=1}^{l-1}(\alpha_m^{(L)})^2\varepsilon+\prod_{m=1}^\lambda(\alpha_m^{(L)})^2\varepsilon\\
			&=\left[\sum\limits_{l=1}^{\lambda}\left(2\alpha^{(L)}_l\sum_{j=l}^{\lambda+1}\alpha_j^{(D)}\prod_{m=1}^{j-1}(\alpha_m^{(L)})^2+1\right)\prod\limits_{m=1}^{l-1}(\alpha_m^{(L)})^2+\prod_{m=1}^\lambda(\alpha_m^{(L)})^2\right]\varepsilon.
		\end{aligned}
	\end{equation}
\end{widetext}
This error bound is polynomial in $N$. As discussed previously, this represents a theoretical upper bound, and the actual error in practice is often much smaller.

We next comment on two implementation details that are implicit in the above construction: dimensional padding across levels and the use of controlled unitaries in LCU. A practical challenge in this process is that the component matrices at different levels have different dimensions. For example, $U^D_{l+1}$ block encodes a matrix smaller than the full system. To address this dimensional mismatch, we implicitly pads the smaller matrices with zeros without introducing any additional computational complexity. The block encodings take the following forms:
\begin{equation*}
    U_l^L=\begin{pmatrix}
        L^1_l&0&*\\
        L^2_l&0&*\\
        *&*&*
    \end{pmatrix},
    U_l^R=\begin{pmatrix}
        U^1_l&U_l^2&*\\
        0&0&*\\
        *&*&*
    \end{pmatrix},
    U_{l+1}^D=\begin{pmatrix}
        D_{l+1}&C_1&*\\
        C_2&C_3&*\\
        *&*&*
    \end{pmatrix}.
\end{equation*}
Here, the blocks denoted by $*$ and $C_i$ are arbitrary in the sense that their contents are chosen to ensure the entire matrix is unitary. The QMM rule ensures the product correctly places the desired matrix in the top-left block:
\[\begin{pmatrix}
    L^1_lD_{l+1}U_l^1&L^1_lD_{l+1}U_l^2&*\\
    L^2_lD_{l+1}U_l^1&L^2_lD_{l+1}U_l^2&*\\
    *&*&*
\end{pmatrix}.\]
These arbitrary matrices do not affect the target block, and this step does not introduce additional error. Another implementation issue is that the LCU procedure requires controlled unitaries, not the unitaries themselves. These are typically constructed by decomposing the unitary into a sequence of elementary gates and then making each gate controlled. Due to the nested structure of our algorithm, a unitary at level $l$ must be controlled by $l$ qubits, which requires the use of multi-controlled gates. The implementation of multi-controlled gates is discussed in \cite{zindorfEfficientImplementationMulticontrolled2025}. With these controlled unitaries prepared, our algorithm can be implemented accordingly.

Combining the levelwise bounds on $\alpha_l^{(D)}$ and $\alpha_l^{(L)}$ with \eqref{eq:subnormal_direct} yields the explicit estimate
\begin{equation}
\alpha_A\leq2^d (6^d-3^d)rc_{\rm sp}\frac{(r(2^dr-r+1)f^2)^{\lambda+1}-r(2^dr-r+1)f^2}{r(2^dr-r+1)f^2-1}+3^dn_1c_{\rm sp},
\end{equation}
where $c_{\rm sp}$ is the maximum magnitude of the entries in $A$. The term $(r(2^dr-r+1)f^2)^\lambda$ indicates that this upper bound is polynomial in the matrix size $N$. However, a tighter bound and a smaller subnormalization factor can be achieved in specific cases. To compare our method with previous work \cite{nguyenBlockencodingDenseFullrank2022}, we consider a matrix generated by the kernel function $\vert x-y\vert ^{-p}$ for uniformly spaced points on a line, where the finest level cell size is $1$. For this case, the maximum magnitude of an entry in $D_l$ is bounded by $2^{-(l-1)p}$. Substituting this bound into Equation \eqref{eq:subnormal_direct} yields:
\begin{equation}\label{eq:subnormofkernelp}
\begin{aligned}
    \alpha_1&=\sum\limits_{l=1}^{\lambda+1}\alpha_l^{(D)}\prod\limits_{m=1}^{l-1} (\alpha_m^{(L)})^2\\
    &\leq2^d(6^d-3^d)r\frac{(\frac{f^2r(2^dr-r+1)}{2^p})^{\lambda+1}-\frac{f^2r(2^dr-r+1)}{2^p}}{\frac{f^2r(2^dr-r+1)}{2^p}-1}+3^dn_1.
\end{aligned}
\end{equation}
When $2^p>f^2r(2^dr-r+1)$, the right-hand side of Equation \eqref{eq:subnormofkernelp} is bounded by a constant. Since $f>1$ and $r>1$, this condition requires $p>1$. The difference with \cite{nguyenBlockencodingDenseFullrank2022} arises because the subnormalization factors accumulate multiplicatively during the QMM steps. Nevertheless, our method offers other benefits. It reduces the number of nonzero elements that the oracles need to handle, which can lead to a more efficient implementation. Additionally, the general framework is applicable to a wider class of kernel functions. These advantages are confirmed by the numerical results in Section \ref{se:numerical}.

\section{Numerical applications}\label{se:numerical}
Many problems in science and engineering can be formulated as a integral equations. The hierarchical matrix structures discussed previously are well-suited for constructing block encodings of dense matrices that arise from the discretization of integral equations. Solving the resulting linear systems is a promising application for quantum algorithms.

The general form of the second kind Fredholm integral equation is \cite{baoSingularitySwappingMethod2024}:
\begin{equation}\label{eq:integralequation}
    \sigma(x)+\int_{\Omega}K(x,y)\sigma(y)dy=f(x).
\end{equation}
Here, $\sigma(x)$ is the unknown function, $f(x)$ is a given function, and $K(x,y)$ is the kernel of the integral operator. Discretizing Equation \eqref{eq:integralequation} on a set of $N$ points $\{x_i\}$ yields a system of linear equations:
\begin{equation}
    \sigma(x_i)+\sum_{j=1,j\ne i}^N w_j K(x_i,x_j)\sigma(x_j)=f(x_i),\quad 1\le i\le N,
\end{equation}
where $w_j$ are quadrature weights. This can be written in the matrix form $A\sigma=f$, where $A$ is a typically dense matrix.

As a test case, we consider the 2D interior Dirichlet problem for the Helmholtz equation with a wavenumber of $\kappa=40$ on a domain with a starfish-shaped boundary. The boundary is discretized into $512$ panels, and integrals are approximated using Gaussian quadrature with $16$ nodes per panel. To construct the hierarchical matrix, we use an HBS compression tolerance of $\varepsilon=10^{-10}$. We first verify the accuracy of this classical compression. As shown in Figure \ref{fig:real}, the solution obtained with the HBS matrix is in close agreement with the one from a direct solver. The observed difference between the two solutions is on the order of $10^{-8}$.

\begin{figure*}[htbp]
\centering
\subfigure[]{\includegraphics[width=0.3\textwidth]{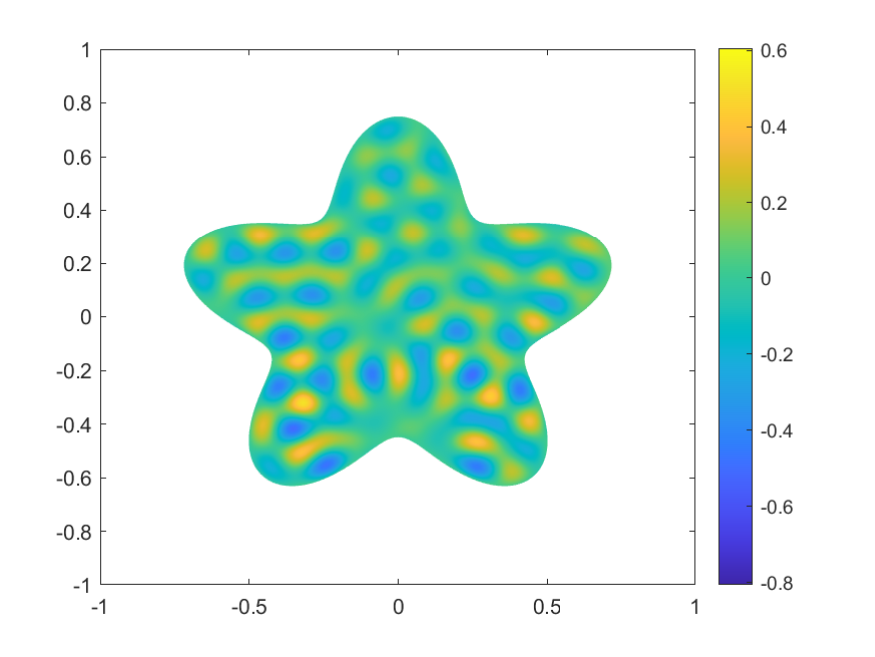}}
\subfigure[]{\includegraphics[width=0.3\textwidth]{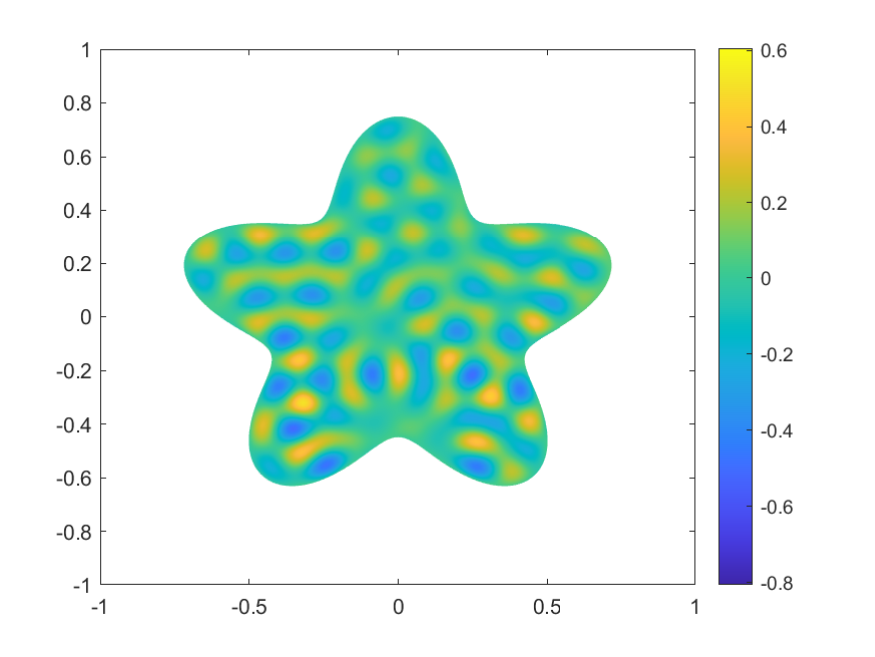}}
\subfigure[]{\includegraphics[width=0.3\textwidth]{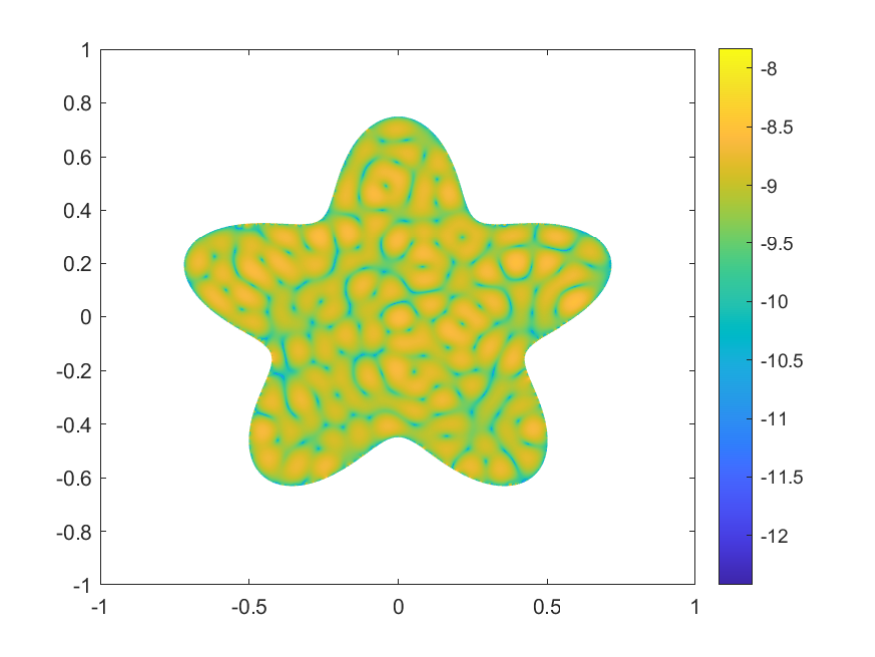}}
\caption{\label{fig:real}{Comparison of the direct solution} (a) and the solution obtained using the HBS-approximated matrix (b) for the Helmholtz equation with Dirichlet boundary conditions. (c) shows the absolute residual in logarithm.}
\end{figure*}

Having confirmed the accuracy of the HBS approximation, we now evaluate metrics relevant to the performance of the quantum algorithm. We use the same starfish-shaped domain but with equispaced points on the boundary, increasing the number of points from $2000$ to $30000$.  Figure \ref{fig:comparison}(a) compares the HBS decomposition time with and without proxy approximation as the problem size $N$ increases. Although the runtime grows superlinearly in both cases, the figure shows that using the proxy technique consistently reduces the computational cost. The condition number of the matrix before and after sparsification is shown in Figure \ref{fig:comparison}(b). As the matrix size increases, the condition number grows, which suggests that a regularization method may be useful for larger problems. The number of nonzero entries after sparsification is shown in Figure \ref{fig:comparison}(c), with the dotted line representing a linear reference. The number of nonzeros grows nearly linearly, indicating that our method effectively compresses the matrix.

\begin{figure*}[htbp]
\centering
\includegraphics[width=0.9\textwidth]{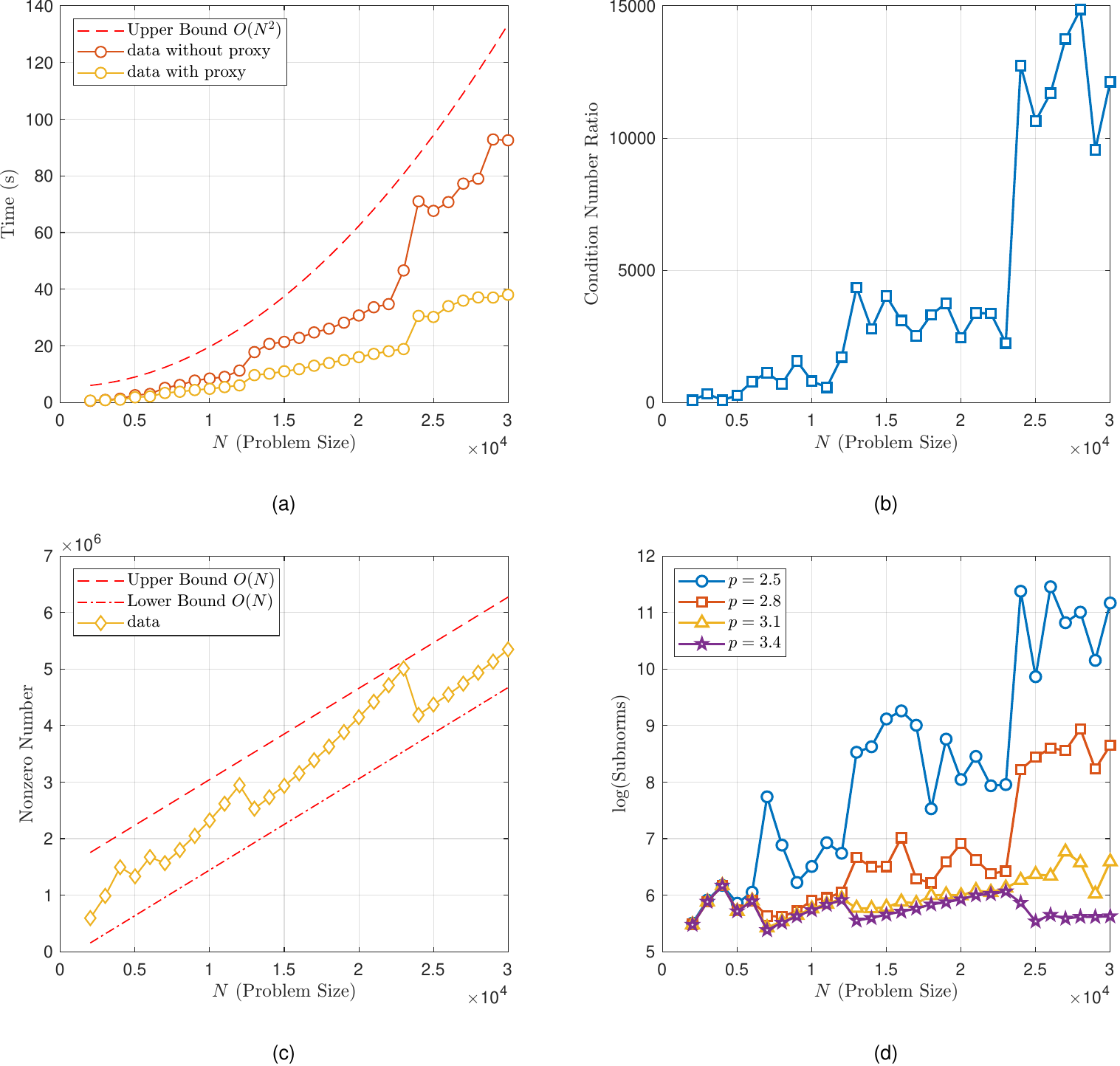}
\caption{\label{fig:comparison} Performance of Algorithm \ref{alg1}. (a) Decomposition runtime as a function of matrix size. (b) Ratio of the condition number of the sparsified matrix to that of the original matrix versus matrix size. (c) Number of nonzero elements as a function of matrix size. (d) Logarithm of the subnormalization factor versus matrix size for the kernel $|x-y|^{-p}$ with different values of $p$.}
\end{figure*}

We also test our algorithm for the kernel $|x-y|^{-p}$ with different values of $p$. As shown in Figure \ref{fig:comparison}(d), we observe that as $p$ increases, the subnormalization factor first grows and then approaches a constant. This result is consistent with the analysis leading to Equation \eqref{eq:subnormofkernelp}.

Finally, to test the scalability of our approach to higher dimensions, we consider the 3D interior Dirichlet problem for the Helmholtz equation on a spherical boundary. We analyze the number of nonzero elements in the matrix generated by our algorithm. As shown in Figure \ref{fig:3d}, this number grows nearly linearly with the matrix size, indicating the method remains effective in 3D.

\begin{figure}[htbp]
\centering
\includegraphics[width=0.5\textwidth]{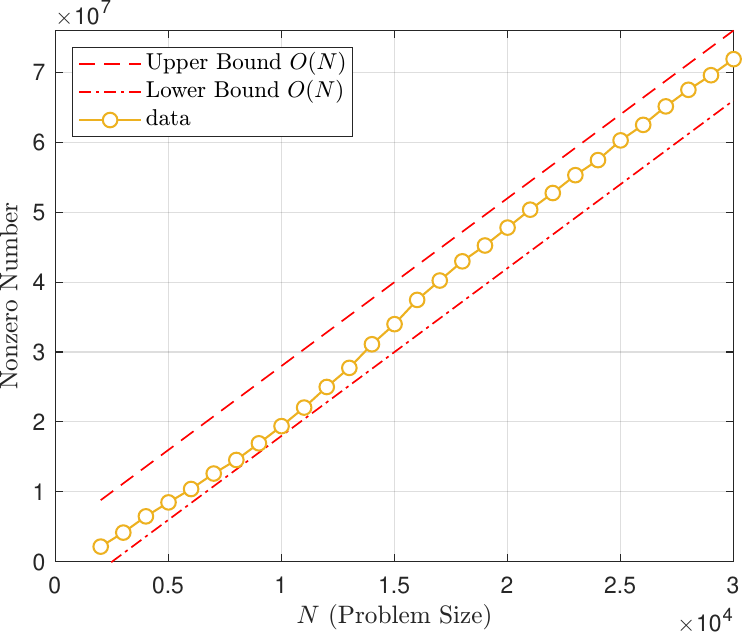}
\caption{\label{fig:3d} Performance of Algorithm \ref{alg1} for the 3D case. The plot shows the number of nonzero elements as a function of matrix size.}
\end{figure}

\section{Conclusions}\label{se:conclusions}
In this work, we addressed the challenge of constructing efficient block encodings for dense matrices with a HBS structure, an important step for applying quantum linear system algorithms to problems in science and engineering. We proposed and analyzed two distinct methods to achieve this. The first approach is based on an extended sparsification technique, which transforms the dense HBS matrix into a larger but sparse system. The second is a direct method that recursively constructs the block encoding by using the hierarchical decomposition of the matrix. Our numerical experiments on integral equations confirm the viability of these approaches.

A primary advantage of our approaches is their ability to reduce the effective number of matrix nonzeros to a total that scales linearly with the problem size. This is crucial for lowering the oracle complexity required by the quantum algorithm. While constructing this sparse representation involves a classical preprocessing cost that also scales as $O(N)$, solving the resulting system on a quantum computer can still provide a huge speedup. This is because classical solvers for such problems can be slow due to large constant factors. Future research could focus on developing preconditioning strategies compatible with the sparsification approach or investigating methods to control the subnormalization factor and error propagation in the direct construction. As block encoding is a fundamental primitive in quantum computing, we expect these methods for HBS matrices will expand the range of practical problems solvable by future quantum algorithms.

\acknowledgements{The work of JL was partially supported by the National Key Research and Development Program of China (2025YFA1016800), NSFC grant No. 12371427 and the “Xiaomi Young Scholars” program from Xiaomi Foundation.}

\printbibliography

\end{document}